\newtheorem{theorem}{Theorem}[section]
\newtheorem{proposition}[theorem]{Proposition}
\newcounter{ct}
\newenvironment{remark}[1][Remark:]{\begin{trivlist}
\item[\hskip \labelsep {\bfseries #1}]}{\end{trivlist}}	
\begin{document}

\title{Jammer-Assisted Resource Allocation\\ in Secure OFDMA with Untrusted Users\thanks{Copyright (c) 2013 IEEE. Personal use of this material is permitted. However, permission to use this material for any other purposes must be obtained from the IEEE by sending a request to pubs-permissions@ieee.org.}
\thanks{%Manuscript received May 16, 2015; revised November 11, 2015; accepted December 24, 2015. Date of publication xxxxxxxx xx, 2016; date of current version xxxxx xx, 2016. 
This work has been supported by the Department of Science and Technology (DST) under Grant SB/S3/EECE/0248/2014. The associate editor coordinating the review of this paper and approving it for publication was xxxxxxxxxxxxx.}} 
\author{Ravikant Saini, %\textit{Student Member, IEEE},
 Abhishek Jindal\thanks{R. Saini and A. Jindal are with the Bharti School of Telecom, IIT Delhi, New Delhi, India. (e-mail: ravikant.saini@dbst.iitd.ac.in; abhishek.jindal@dbst.iitd.ac.in)}, and Swades De %, \textit{Senior Member, IEEE} 
 \thanks{S. De is with the Department of Electrical Engineering and Bharti School of Telecom, IIT Delhi, New Delhi, India. (e-mail: swadesd@ee.iitd.ac.in).}
\thanks{Color  versions  of  one  or  more  of  the  figures  in  this  paper  are  available online at http://ieeexplore.ieee.org.}
\thanks{Digital  Object Identifier  xx.xxxx/TIFS.2016.xxxxxxx.}
}

\maketitle

\begin{abstract}
In this paper, we consider the problem of resource allocation in an OFDMA system with single source and $M$ untrusted users in presence of a friendly jammer.
The jammer is used to improve either the weighted sum secure rate or the overall system fairness.
The formulated optimization problem in both the cases is a Mixed Integer Non-linear Programming (MINLP) problem, belonging to the class of NP-hard.
In the sum secure rate maximization scenario, we decouple the problem and first obtain the subcarrier allocation at source and the decision for jammer power utilization on a per-subcarrier basis. 
Then we do joint source and jammer power allocation using primal decomposition and alternating optimization framework.  
Next we consider fair resource allocation by introducing a novel concept of subcarrier snatching with the help of jammer.
We propose two schemes for jammer power utilization, called proactively fair allocation (PFA) and on-demand allocation (ODA).
PFA considers equitable distribution of jammer power among the subcarriers, while ODA distributes jammer power based on the user demand.
In both cases of jammer usage, we also present suboptimal solutions that solve the power allocation at a highly reduced complexity. 
Asymptotically optimal solutions are derived to benchmark optimality of the proposed schemes.
We compare the performance of our proposed schemes with equal power allocation at source and jammer.
Our simulation results demonstrate that the jammer can indeed help in improving either the sum secure rate or the overall system fairness.
\end{abstract}

{\IEEEkeywords Secure OFDMA, friendly jammer, subcarrier snatching, rate maximization, max-min fairness}

\section{Introduction}
\label{sec_introduction}
Rapid deployment of wireless communication systems has always been engraved with the issue of security of the transmitted data. 
The basic reason is the broadcast nature of transmission which makes the signals vulnerable to tapping by malicious users \cite{amitav_TCST_2014}. 
Security of the transmitted signal is generally considered as a responsibility of the higher layers which employ cryptographic techniques. This strategy relies on the basic assumption that the enciphering system is unbreakable by the malicious users \cite{Shiu_WC_2011}. 
With growing computing capabilities, such measures prove to be insufficient and it has
motivated the research community to explore security at the physical layer.
Physical layer security finds its basis from independence of the wireless communication channels and has a low implementation complexity \cite{Bloch_book_2011}.
This added layer of security is considered as the strictest kind of security, not requiring even any kind of key exchange \cite{amitav_TCST_2014, Shiu_WC_2011}.

Orthogonal frequency division multiple access (OFDMA) is a potential physical layer technology for the next generation access networks such as WiMAX, LTE, and beyond. Hence, the study of physical layer security in OFDMA has gained considerable attention in recent years \cite{Haohao_ICC_2013, Karachontzitis_TIFS_2015,Derrick_TVT_2012,  Derrick_TWC_2011, Munnujahan_ICC_2013, Wang_eurasip_2013, Jorswieck2008, Xiaowei_TIFS_2011, Jorswieck_SAC_2013}.
The subcarrier allocation and power optimization in multi-node secure OFDMA system has been studied in two cases: in one scenario transmitter assumes all users to be trusted and there exists an external eavesdropper which tries to decode the data of trusted users \cite{Haohao_ICC_2013, Karachontzitis_TIFS_2015, Derrick_TVT_2012,  Derrick_TWC_2011, Munnujahan_ICC_2013}, while in another scenario there is a mutual distrust among the users and a particular user may demand the source to transmit its data considering all other users as the potential eavesdroppers \cite{Xiaowei_TIFS_2011, Jorswieck2008, Jorswieck_SAC_2013}. 

In \emph{trusted users case}, the authors in \cite{Karachontzitis_TIFS_2015} considered maximizing the minimum of weighted sum secure rate of all the users. They proposed a complex MILP based solution and less complex suboptimal schemes. However, the issue of resource scarcity of a user that is very near to the eavesdropper was not highlighted. 
Maximization of energy efficiency with multi-antenna source, eavesdropper, and  single antenna users was studied in \cite{Derrick_TVT_2012} with bounds on per-user tolerable secrecy outage probability.
The average secrecy outage capacity maximization in the same setup with a multi-antenna Decode and Forward relay was studied in \cite{Derrick_TWC_2011}.
The authors in \cite{Derrick_TWC_2015} studied secure rate maximization under fixed quality of service (QoS) constraints for secure communication among multiple source-destination pairs in the presence of multi-antenna external eavesdropper, with the help of multiple single antenna Amplify and Forward (AF) relays.
An extension of the above proposed schemes to multiple eavesdroppers case may be straight forward, but with untrusted users the problems present an interesting challenge with multiple unexplored facets. 
Looking at another domain of untrust, subcarrier assignment and power allocation problem was studied in \cite{Wang_eurasip_2013} in an AF untrusted relay aided secure communication among multiple source-destination pairs.

In an effort to tackle the \emph{untrusted users}, the authors in \cite{Jorswieck2008} proposed to allocate a subcarrier to its best gain user in a two-user OFDMA system and presented the optimal power allocation. 
The scheme can be trivially extended to more than two users for sum rate maximization.
In a scenario having heterogeneous demand of resources by users, the authors in \cite{Xiaowei_TIFS_2011} proposed a joint subcarrier and power allocation policy for two classes of users: secure users demanding a fixed secure rate and normal users which are served with best effort traffic. 
In the cognitive radio domain, secure communication over a single carrier, between multi-antenna secondary transmitter and a fixed user considering other secondary and primary users as eavesdroppers was studied in \cite{Derrick_TVT_2015}. 
Similarly, precoder design to maximize the sum secrecy rate to achieve confidential broadcast to users with multiple antennas was studied in \cite{Yang_TC_2014, Geraci_TC_2012}.
The authors in \cite{Jorswieck_SAC_2013} solved the resource allocation problem among multiple source-destination pairs in a relay-aided scenario with untrusted users.
However, in secure OFDMA fair resource distribution among untrusted users poses new challenges which have not yet been investigated. 

\subsection{Motivation}
The study in \cite{Xiaowei_TIFS_2011} raised the feasibility issue of resource allocation problem because of the channel conditions, i.e., some users may not achieve the required secure rate due to tapping by untrusted users. This motivates us to investigate if the secure rate over a subcarrier could be any how improved.
The strategy proposed in \cite{Munnujahan_ICC_2013} utilizes interference to improve the secure rate of a user. But this scheme cannot be used in multi-user untrusted scenario because it relies on a strong assumption that the jammer affects the eavesdropper only. 
It was shown in \cite{SGOEL_TWC_2008} that, for single user multi-antenna system, secrecy rate can be improved by a multi-antenna source by jointly transmitting message in the range space and interference in the null space of the main channel. Since this solution is difficult to apply for single antenna systems, the authors in \cite{XTANG_TIT_2011} showed the possibility of achieving a positive rate even when eavesdropper's channel is stronger compared to main channel by using jamming power control only.
The studies in \cite{Munnujahan_ICC_2013, SGOEL_TWC_2008, XTANG_TIT_2011} are based on single user system. The implications of utilizing interference in multi-user scenario are yet to be investigated.    
With regard to fair resource allocation without the help of interference, the strategy proposed in \cite{Karachontzitis_TIFS_2015}, which balances the secure rate among users, can limit the maximum achievable rate of the system because of a poor user facing strong eavesdropping by other users. This leads to system resource wastage.
Effectively utilizing jammer power control in multiple untrusted user scenario for sum rate maximization or fair resource allocation raises challenges which to the best of our knowledge has not yet been investigated in the literature. 

In this paper, we intend to explore the role of the jammer power in alleviating the following issues which plague single antenna secure OFDMA system with $M$ untrusted users:
\begin{itemize}
\item Secure rate achievable by the users can be very low as compared to the single external eavesdropper case, because now there are $M-1$ wiretappers instead of one; 
\item A large number of users can starve for subcarriers, because a group of users with very good channel gains may prohibit secure communication to the other users.
\end{itemize}

Assuming the jammer to be a node affecting all the users, we intend to use the jammer power for individual and independent jamming over the subcarriers.
The resource allocation problem in presence of jammer is a complex Mixed Integer Non Linear Programming (MINLP) problem belonging to the class of NP-hard \cite{Di_Yuan_TVT_2013,Y_F_LIU_TSP_2014}. 
Due to the requirement of decision on subcarrier allocation at the source as well as to use jammer power over a subcarrier, the problem exhibits combinatorial nature having exponential complexity with number of subcarriers \cite{MTAO_TWC_2008}.
Hence, instead of attempting a global optimal solution, we study the resource allocation problem to improve either the sum secure rate or the overall system fairness. 
While extending our preliminary work in \cite{ravikant_ICC_2015}, where we presented two suboptimal solutions based on sequential source and jammer power allocation, 
we conduct a deeper study on the behavior of secure rate with jammer power and 
solve the joint source and jammer power optimization problem. % optimally.
For max-min fair resource allocation, two novel methodologies of jammer power utilization are introduced.
We also present the asymptotic analysis of the algorithms with $P_S \to \infty$ and $P_J \to \infty$, and discuss the computational complexity of all the proposed schemes under both the usages of jammer.

\subsection{Contribution}
Our contributions can be summarized as answers to the following two questions:

(1) \emph{How to improve sum secure rate in OFDMA systems with untrusted users?}
To address this question, we consider the possibility of using jammer power over a subcarrier to improve the secure rate beyond what can be achieved by source power only.
The features of the proposed solution are as follows:
\begin{itemize}
\item We obtain the constraints of secure rate improvement over a subcarrier and show that the secure rate in this constrained domain is a quasi-concave function of jammer power, thereby offering  a unique maxima.
\item The jammer aided approach introduces a new challenge - referred as SNR reordering in the rest of the paper - which makes the problem combinatorial even after subcarrier allocation and jammer utilization decision. 
We introduce the concept of constrained-jamming by developing jammer power bounds to handle this challenge.
\item For the known subcarrier allocation and jammer decision, the joint source and jammer power allocation problem is solved optimally using \emph{primal decomposition (PD)}\cite{sboyd_decomposition} and \emph{alternating optimization (AO)}\cite{sboyd_alternating} techniques. 
\item We analyze the complexity of the proposed algorithm and show its convergence in finite steps.
\item We also propose less complex solution which allows a trade-off between performance and complexity.
\item Asymptotically optimal solution is derived to assess optimality of the proposed scheme.
\end{itemize}

(2) \emph{How to remove subcarrier scarcity and have fair distribution?}
In secure OFDMA with untrusted users and in absence of jammer, the best gain user over a subcarrier is the strongest eavesdropper for all other users. Thus, a subcarrier should normally be given to its best gain user, which may cause some users to starve for channel resources. 
To attain a better fairness, we propose to take away a subcarrier from its best gain user and allocate it to another user with poor channel gain with the help of jammer.
The features of the proposed solution are:
\begin{itemize}
\item Secure rate over the snatched subcarrier is positive when the jammer power is above a certain threshold and attains a maximum value at an optimal jammer power.
\item The conventional max-min fairness algorithm cannot be employed in secure OFDMA because a user with poor channel gains over all the subcarriers may force the algorithm into a deadlock. The proposed max-min fair scheme offers a graceful exit with such users. 
\item Two variants of max-min fairness algorithms are proposed where the jammer power can be utilized either by preserving it  for possible snatching over each subcarrier, or by allocating it based on the demand of snatching user.
\item The power allocation is done by PD and AO process while the issue of signal to noise ratio (SNR) reordering is handled similarly as in sum rate maximization case.
\item Low complexity solutions have been presented for both the variants of max-min fairness algorithms. 
\item We obtain an asymptotic upper bound of the fairness achievable by our proposed scheme. 
\end{itemize}

The remainder of the paper is organized as follows: The system model is introduced in Section \ref{sec:model}. The sum secure rate maximization problem is studied in Section \ref{sec_sum_rate_imp}, followed by fair resource allocation study in Section \ref{sec_max_min_fairness}. Asymptotic analysis has been presented in Section \ref{sec_asymp_anlysis}. Simulation results are discussed in Section \ref{sec_results}.  Section \ref{sec_conclusion} concludes the paper.

\section{System model}\label{sec:model}
We consider the downlink of an OFDMA system with a single source (base station), a friendly jammer, $M$ untrusted mobile users (MUs) which are randomly distributed in the cell coverage area, and $N$ subcarriers. The jammer, which is considered as another node in the network (e.g., in LTE-Advanced it could be an idle relay node), is controlled by the source to help improve the overall system performance. 
Jammer is assumed to be capable of collecting channel state information (CSI) in the uplink and sending jamming signal which is unknown to users on the downlink.
Since all users request secure communication with the source, they share their CSI with the source as well as the jammer.
In the present study we assume the jammer to be located randomly in the cell area.

Source-to-MUs and jammer-to-MUs channels are considered to experience slow and frequency-flat fading, such that the channel parameters remain constant over a frame duration but vary randomly from one frame to another.
Perfect CSI of source-to-MU and jammer-to-MU channel pairs for all the MUs is assumed to be available at source \cite{Haohao_ICC_2013, Xiaowei_TIFS_2011, Derrick_TVT_2012, Derrick_TWC_2011, Karachontzitis_TIFS_2015, Munnujahan_ICC_2013, Wang_eurasip_2013, Jorswieck2008}.
Source utilizes this information for subcarrier allocation, and source and jammer power allocation. %subject to power constraints.
We consider that a subcarrier is exclusively allocated to one user only, which has been proved to be optimal for sum secure rate maximization \cite{Jorswieck2008}.

\begin{figure}[!htb]
\centering
\epsfig{file=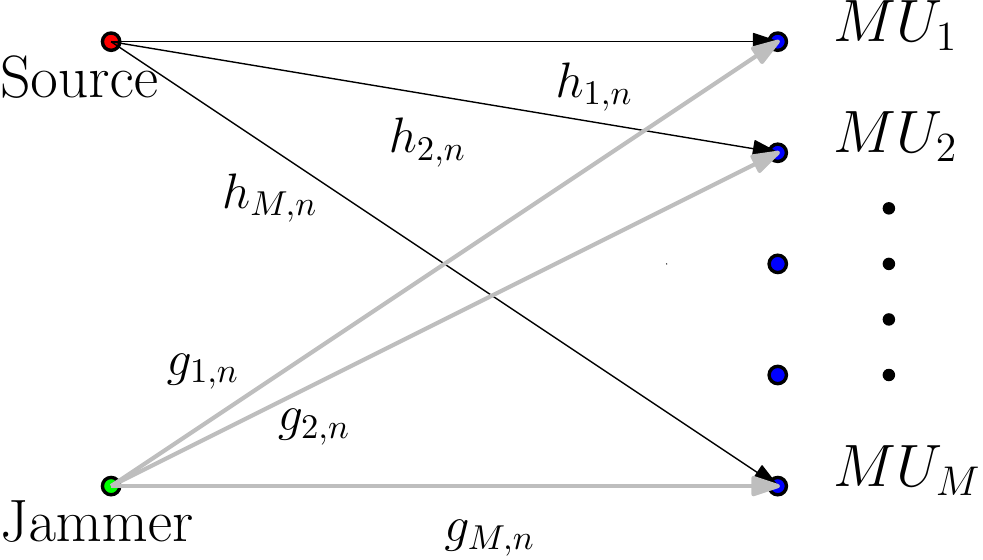, height=1.2in}
\caption{System model}
\label{fig:system_model}
\end{figure}

The secure OFDMA system with $M$ untrusted users is a multiple eavesdropper scenario, where for each main user there exist $(M-1)$ eavesdroppers.
Out of these $(M-1)$ eavesdroppers, the strongest one is considered as the equivalent eavesdropper (hereafter referred as eavesdropper). 
The secure rate of the main user over a subcarrier is defined as the non-negative capacity difference between the main user and the eavesdropper. Let $h_{i,n}$ be the source to $i$th user channel coefficient and $g_{i,n}$ be the jammer to $i$th user channel coefficient over subcarrier $n$ as shown in Fig. \ref{fig:system_model}. Then, the secure rate of user $m$ on $n$th subcarrier is given as \cite{Jorswieck2008, Xiaowei_TIFS_2011}:
\begin{align}\label{secure_rate_definition}
R_{m,n} & = \left[ \log_2 \left( 1+\frac{P_{s_n}|h_{m,n}|^2}{\sigma^2 + \pi_{j_n} P_{j_n} |g_{m,n}|^2 } \right) \right. \nonumber \\
 & \hspace{-0.12in} - \left. \max\limits_{e \in \{1,2,\cdots M\} \setminus m} \log_2 \left( 1+\frac{P_{s_n} |h_{e,n}|^2}{\sigma^2+ \pi_{j_n} P_{j_n} |g_{e,n}|^2 } \right)  \right]^+
\end{align}
where $(x)^+=\max(x,0)$, $\sigma^2$ is the AWGN noise variance,  $P_{s_n}$ and $P_{j_n}$ are respectively source and jammer powers over subcarrier $n$. $\pi_{j_n} \in \{0, 1\}$ is an indicator of absence or presence of jammer power on subcarrier $n$ such that $\pi_{j_n}=0 \implies P_{j_n}=0$, and $e$ is the eavesdropper. The non-linearity of secure rate in source and jammer powers along with the $\max$ operator complicate the optimal resource allocation which will be discussed in the Section \ref{sec_sum_rate_imp} and \ref{sec_max_min_fairness}.

\emph{Example:} 
Let us consider a symbolic 3 user 5 subcarrier OFDMA system, with the respective channel gains $|h_{m,n}|$ and $|g_{m,n}|$ given in Tables \ref{src_gain} and \ref{jammer_gain}. This system will be used to illustrate the key concepts developed further in the paper. \hspace{0.2in}\textbf{$\Box$}

\section{Resource allocation for sum secure rate maximization}
\label{sec_sum_rate_imp}
In this section we discuss the joint source and jammer resource optimization problem for weighted sum secure rate maximization. The problem can be stated as:
\begin{align}\label{opt_prob_jra_rate_max}
& \underset{P_{s_n}, P_{j_n}, \pi_{m,n}, \pi_{j_n}} {\text{maximize}} \sum_{m=1}^M \sum_{n=1}^N \ w_m \pi_{m,n}  R_{m,n} \nonumber \\
& \text{subject to}  \nonumber \\ 
&C_{1,1}: \sum_{n=1}^N P_{s_n} \leq P_S, \qquad \quad C_{1,2}: \sum_{n=1}^N P_{j_n} \leq P_J, \nonumber \\
&C_{1,3}: \pi_{j_n} \in \{0,1\}  \text{ } \forall n, \qquad C_{1,4}: \pi_{m,n} \in \{0,1\} \text{ } \forall m, n,\nonumber \\
&C_{1,5}: \sum_{m=1}^M \pi_{m,n} \leq 1 \text{ } \forall n, \quad C_{1,6}:P_{s_n}\ge 0, P_{j_n}\ge 0 \text{ } \forall n
\end{align} 
where $\pi_{m,n}$ is a binary allocation variable to indicate whether  subcarrier $n$ is given to user $m$ or not, $w_m$ is the priority weight allocated by the higher layers to user $m$, and 
$P_S$ and $P_J$ are source and jammer power budgets, respectively.
$C_{1,1}$ and $C_{1,2}$ are budget constraints, $C_{1,3}$ is jammer allocation constraint, $C_{1,4}$ and $C_{1,5}$ are subcarrier allocation constraints, and $C_{1,6}$ denotes source and jammer power boundary constraints.

The optimization problem in (\ref{opt_prob_jra_rate_max}) has total four variables per subcarrier:  $\pi_{j_n}$, $\pi_{m,n}$ as binary variables; and $P_{s_n}$ and $P_{j_n}$ as continuous variables. 
Since the problem is a non-convex combinatorial problem belonging to the class of NP-hard, there is no polynomial time optimal solution possible \cite{Di_Yuan_TVT_2013,Y_F_LIU_TSP_2014}. We tackle the problem by breaking it in parts and attempt to find a near-optimal solution, which approaches asymptotically optimal solution (discussed in Section \ref{sec_asymp_anlysis}) as $P_S$ and $P_J$ increases.
For this, first we perform subcarrier allocation without considering the jammer.
Next we distinguish those subcarriers over which jammer can improve the secure rate, 
and finally we complete the joint power allocation.

% 
% setting a new column type that will have all columns of equal width
%-------------------------------------------------------------
\newcolumntype{C}{>{\centering\arraybackslash}p{3em}}
\newcolumntype{L}[1]{>{\raggedright\let\newline\\\arraybackslash}p{#1}}
%-------------------------------------------------------------
\begin{table}[!htb]
\caption{Source-users channel gains $|h_{m,n}|$}\label{src_gain}
\centering
{\scriptsize
\begin{tabular}{|L{.5cm} |L{1cm}| L{1cm}| L{1cm}| L{1cm}| L{1cm}|}
\hline
{} &    {$c_1$} & {$c_2$} & {$c_3$} & {$c_4$} & {$c_5$} \\
\hline
{$u_1$} & {1.1027} & {0.3856} & {0.6719} & {1.2101} & {0.7043} \\
\hline
{$u_2$} & {0.7423} & {1.0735} & {0.6558} & {1.0006} & {0.8943} \\
\hline
{$u_3$} & {0.7554} & {1.4772} & {0.2498} & {1.3572} & {3.5391} \\
\hline
\end{tabular}
}
\end{table} 

% 
% setting a new column type that will have all columns of equal width
%-------------------------------------------------------------
%\newcolumntype{C}{>{\centering\arraybackslash}p{3em}}
%\newcolumntype{L}[1]{>{\raggedright\let\newline\\\arraybackslash}p{#1}}
%-------------------------------------------------------------
\begin{table}[!htb]
\caption{Jammer-users channel gains $|g_{m,n}|$}\label{jammer_gain}
\centering
{\scriptsize
%\begin{tabular}{|C| |C| C| |C| C|}
\begin{tabular}{|L{.5cm} |L{1cm}| L{1cm}| L{1cm}| L{1cm}| L{1cm}|}
\hline
{} &    {$c_1$} & {$c_2$} & {$c_3$} & {$c_4$} & {$c_5$} \\
\hline
{$u_1$} & {3.3624} & {6.0713} & {3.4125} & {3.0584} & {0.4987} \\
\hline
{$u_2$} & {8.1741} & {7.0607} & {4.1047} & {0.9860} & {1.6860} \\
\hline
{$u_3$} & {0.9028} & {2.0636} & {0.5605} & {3.0277} & {4.5346} \\
\hline
\end{tabular}
}
\end{table}

\subsection{Subcarrier allocation at source} 
\label{subsec_subc_alloc_src}
In presence of jammer, subcarrier allocation at source, i.e., $\pi_{m,n}$ decision is difficult. %because of the effect of $\pi_{j_n}$ and $P_{j_n}$ on the signal to noise ratio (SNR).
In secure OFDMA with untrusted users a subcarrier can only be given to the user having maximum SNR over the subcarrier. Any change in $P_{j_n}$ may force the decision of subcarrier allocation to change because of the possible reordering of updated SNRs (cf. (\ref{secure_rate_definition})). Since $P_{s_n}$ is in numerator, any change in $P_{s_n}$ does not affect the SNR ordering.
This makes the problem combinatorial, as we may need to check subcarrier allocation for every update of $P_{j_n}$.
In order to bypass this step, we initially assume that the jammer is not present.
In absence of jammer $\pi_{j_n}=0$ $\forall$ $n$, and the secure rate 
definition in (\ref{secure_rate_definition}) changes to  
\begin{align}\label{secure_rate_definition_without_jammer}
R_{m,n}|_{_{\pi_{j_n}=0}} & = \left[ \log_2 \left( 1+\frac{P_{s_n}|h_{m,n}|^2}{\sigma^2} \right) \right. \nonumber \\
 & \hspace{-0.12in} - \left. \max\limits_{e \in \{1,2,\cdots M\} \setminus m} \log_2 \left( 1+\frac{P_{s_n} |h_{e,n}|^2}{\sigma^2} \right)  \right]^+.
\end{align}
As observed in (\ref{secure_rate_definition_without_jammer}), $|h_{m,n}|$ $>$ $\max\limits_{e \in \{1,2,\cdots M\} \setminus m} |h_{e,n}|$ is required to have positive secure rate over subcarrier $n$, irrespective of $P_{s_n}$. 
Thus, the subcarrier allocation policy, allocating a subcarrier to its best gain user can be stated as: 
\begin{align}\label{subcarrier_alloc_policy}
\pi_{m,n}=
\begin{cases}
1, & \text{if $|h_{m,n}|=\max\limits_{e \in \{1,2,\cdots M\}}|h_{e,n}|$}\\
0, & \text{otherwise.}
\end{cases} 
\end{align}

\begin{remark}
Note that the subcarrier allocation (\ref{subcarrier_alloc_policy}) depends only on users' gains. It is indifferent to users' priority imposed through weights which will play their role in power allocation.  
\end{remark}

\subsection{Subcarrier allocation at jammer and jammer power bounds}\label{subsec_subc_alloc_jammer_bounds}
In the presence of jammer, even after subcarrier allocation at source is done, the problem is still combinatorial due to  $\pi_{j_n}$. Any change in $P_{j_n}$ may even jeopardize the earlier decision on $\pi_{m,n}$ as described in the previous section. In-order to solve this issue, we first introduce the concept of rate improvement, which will help us decide $\pi_{j_n}$ retaining the decision on $\pi_{m,n}$. 

\subsubsection{Selective jamming for secure rate improvement}\label{subsec_sel_jamm_rate_imp}
Since jammer affects all the users, it appears that using jammer over a subcarrier may degrade the secure rate.
But it is interesting to note that, with jammer power the secure rate can be improved beyond what is achieved without jammer.
For the proof of concept, let us consider a simple OFDMA scenario having four nodes: source, jammer, and two users $m$ and $e$.
Let a subcarrier $n$ be allocated to user $m$, and $e$ be the eavesdropper.
The following proposition describes the possibility of secure rate improvement and the existence of optimal jammer power achieving maximum secure rate over subcarrier $n$.

\begin{proposition}\label{prop_sri_jammer_rate_imp_constraints}
The secure rate over a subcarrier $n$ having $\lvert{h_{m,n}}|>|{h_{e,n}}|$ can be improved if $|{g_{e,n}}|>|{g_{m,n}}|$ and the source and jammer powers 
$P_{s_n}$ and $P_{j_n}$ are constrained as
\begin{align}\label{eqn_pjn_rate_imp_cond}
P_{s_n}&>
\begin{cases}
 \frac{ \sigma^2 \left( |g_{m,n}|^2 |h_{m,n}|^2 - |g_{e,n}|^2 |h_{e,n}|^2 \right) }{ \left( |g_{e,n}|^2 - |g_{m,n}|^2 \right) |h_{m,n}|^2 |h_{e,n}|^2 },  & \text{if } \frac{|g_{m,n}||h_{m,n}|}{|g_{e,n}||h_{e,n}|} > 1 \\
0, & \text{otherwise}
\end{cases} \nonumber\\
\text{and }
& P_{j_n} <  \frac{P_{s_n} \alpha_n+\sigma^2 \beta_n} {|g_{m,n}|^2 |g_{e,n}|^2 \left( |h_{m,n}|^2- |h_{e,n}|^2 \right) } \triangleq P_{j_{n}}^{th_{i}}
\end{align}
where
$\alpha_n = \left( |g_{e,n}|^2 - |g_{m,n}|^2 \right) |h_{m,n}|^2 |h_{e,n}|^2$ and $\beta_n = \left( |g_{e,n}|^2 |h_{e,n}|^2 - |g_{m,n}|^2 |h_{m,n}|^2 \right)$.

In the constrained domain of rate improvement, the rate is a quasi-concave function of $P_{j_n}$ having a unique maxima. 
\end{proposition}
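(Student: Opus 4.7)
The plan is to specialize (\ref{secure_rate_definition}) to the two-node situation (intended user $m$ versus the dominant eavesdropper $e$), treat the result as a single-variable function of $P_{j_n}$ with $P_{s_n}$ held fixed, and then let elementary calculus do all the work. All four conclusions---the channel condition $|g_{e,n}|>|g_{m,n}|$, the two-case bound on $P_{s_n}$, the expression for $P_{j_{n}}^{th_{i}}$, and quasi-concavity---should fall out from the sign of $\partial R_{m,n}/\partial P_{j_n}$ evaluated at $0$, together with the structure of the equation $R_{m,n}(P_{j_n})=R_{m,n}(0)$.

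First I would differentiate $R_{m,n}$ with respect to $P_{j_n}$ and evaluate at $P_{j_n}=0$. After clearing a common positive denominator, the sign reduces to that of $\sigma^2(bd-ac)+ab(d-c)$, with the shorthand $a=P_{s_n}|h_{m,n}|^2$, $b=P_{s_n}|h_{e,n}|^2$, $c=|g_{m,n}|^2$, $d=|g_{e,n}|^2$. Since $a>b$ by hypothesis, $d\le c$ would make both summands non-positive, so $|g_{e,n}|>|g_{m,n}|$ is necessary. Given $d>c$, the sign of $bd-ac$ yields the two cases: when $bd\ge ac$, i.e.\ $|g_{m,n}||h_{m,n}|/(|g_{e,n}||h_{e,n}|)\le 1$, the derivative is positive for every $P_{s_n}>0$; otherwise, solving $\sigma^2(ac-bd)<ab(d-c)$ for $P_{s_n}$ reproduces precisely the threshold stated in (\ref{eqn_pjn_rate_imp_cond}).

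Next I would locate $P_{j_{n}}^{th_{i}}$ by solving $R_{m,n}(P_{j_n})=R_{m,n}(0)$. Writing both sides as logarithms of rational functions and equating the arguments produces a polynomial identity in $P_{j_n}$ in which $P_{j_n}=0$ is manifestly one root; factoring it out leaves a linear equation whose unique non-zero solution, after collecting terms and relabelling with $\alpha_n$ and $\beta_n$, is exactly the expression displayed in the proposition. The sign analysis of the previous step ensures this root is strictly positive, so the rate-improvement domain is the open interval $(0,P_{j_{n}}^{th_{i}})$.

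Finally, for quasi-concavity on that interval, I would note that $\partial R_{m,n}/\partial P_{j_n}=0$, after clearing the common positive denominator, is a quadratic in $P_{j_n}$ and therefore has at most two real roots. The derivative is strictly positive at $0$ by the rate-improvement condition and non-positive at $P_{j_{n}}^{th_{i}}$ (because $R_{m,n}$ lies above $R_{m,n}(0)$ in the open interval and returns to it at the endpoint), so Rolle's theorem guarantees at least one interior stationary point and the quadratic cap prevents a second sign change inside the interval. Hence $R_{m,n}$ is strictly unimodal on $(0,P_{j_{n}}^{th_{i}})$, which is equivalent to quasi-concavity there and delivers a unique maximizer. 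The main obstacle will be keeping the algebra clean in two places---collecting the coefficients when factoring out $P_{j_n}$ in the boundary equation so that $\alpha_n$ and $\beta_n$ emerge in the stated form, and verifying the signs of the quadratic coefficients cleanly enough that at most one of its roots lies in $(0,P_{j_{n}}^{th_{i}})$; once those are handled the remaining steps are routine.
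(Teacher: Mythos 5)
Your proposal is correct and follows essentially the same route as the paper's Appendix A: reduce the inequality $R_{m,n}(P_{j_n})>R_{m,n}(0)$ to a condition that is linear in $P_{j_n}$ (equivalently, to the sign of $\partial R_{m,n}/\partial P_{j_n}$ at $P_{j_n}=0$, which is proportional to $P_{s_n}\alpha_n+\sigma^2\beta_n$), read off the channel and source-power conditions and the threshold $P_{j_n}^{th_i}$, and then show that the numerator of $\partial R_{m,n}/\partial P_{j_n}$ is a quadratic with exactly one positive root. The only point to nail down is the one you already flag: "at most two roots" alone does not rule out a $+,-,+$ sign pattern inside $(0,P_{j_n}^{th_i})$, so you must check (as the paper does) that the quadratic's leading coefficient is negative while its constant term is positive under the rate-improvement constraints, which forces exactly one positive root and hence unimodality.
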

\begin{proof}
See Appendix \ref{sec_appendix_proposition_rate_imp}.
\end{proof}

\emph{Example (continued):} 
Observing the source gains in Table \ref{src_gain}, the best gain users of the subcarriers are $u_1, u_3, u_1, u_3, u_3$, respectively. 
Users $u_3, u_2, u_2, u_1, u_2$ are their corresponding eavesdroppers. 
Subcarriers $c_2, c_3, c_4$ satisfy the condition of secure rate improvement, i.e., $|g_{e,n}|>|g_{m,n}|$.
Considering the unit of transmit power is in Watt, let $P_S = 10$, $P_J=10$, and $\sigma^2=1$. 
Source power thresholds for the three subcarriers are  $P_{s_n}^{th_i} = 0.0, 0.0, 6.3263$, respectively. 
Assuming equal source power allocation over all  subcarriers ($P_{s_n} = 2$) and comparing $P_{s_n}$ with $P_{s_n}^{th_i}$, subcarriers $c_2, c_3$ can be utilized for secure rate improvement while $c_4$ cannot be used.
The corresponding jammer power thresholds for $c_2, c_3$ are respectively found as $P_{j_n}^{th_i} = 1.2693, 0.9560$. 
Variation of SNRs of the users and the secure rate $R_{3,2}$ of user $u_3$ on subcarrier $c_2$ with jammer power $P_{j_2}$ are presented in Table \ref{carreir_2}. 
As observed, $R_{3,2}$ with $P_{j_2}>0$ is higher compared to the value when $P_{j_2}=0$, till $P_{j_2}<P_{j_2}^{th_i} = 1.2693$. $R_{3,2}$ has a maxima between $P_{j_2}=0$ and $P_{j_2}=P_{j_2}^{th_i}$, at $P_{j_2}^o = 0.1027$.\hfill\textbf{$\Box$}

\begin{table}[!htb]
\caption{Users' SNRs and secure rate $R_{3,2}$ versus $P_{j_2}$}\label{carreir_2}
\centering
{\scriptsize
%\begin{tabular}{|C| |C| C| |C| C|}
\begin{tabular}{|L{1cm} |L{1cm}| L{1cm}| L{1cm}| L{1cm}| L{1cm}| L{1cm}|}
\hline
{$P_{j_2} \to $} &    {$0.0$} & {$0.1$} & {$0.2$} & {$1.2$} &  {$1.3$}  \\
\hline
{$SNR_{1}$} & {0.1487} & {0.0317} & {0.0178} & {0.0033} & {0.0030} \\
\hline
{$SNR_{2}$} & {1.1524} & {0.1925} & {0.1050} & {0.0189} & {0.0175} \\
\hline
{$SNR_{3}$} & {2.1821} & {1.5304} & {1.1784} & {0.3571} & {0.3339} \\
\hline
{$R_{3,2}$} & {0.6988} & {1.5518} & {1.4720} & {0.7239} & {0.6882}\\
\hline
\end{tabular}
}
\end{table}

Based on the results described in Proposition \ref{prop_sri_jammer_rate_imp_constraints}, we create a prospective set of subcarriers, over which the jammer power can be used for secure rate improvement.
The subcarrier allocation policy at jammer can be summarized as: 
\begin{align}\label{subcarrier_alloc_jammer}
\pi_{j_n}=
\begin{cases}
1, & \text{if $|g_{e,n}|>|g_{m,n}|$}\\
0, & \text{otherwise.}
\end{cases} 
\end{align}

While extending the result of Proposition \ref{prop_sri_jammer_rate_imp_constraints} to $M>2$, there is an inherent challenge associated with the allocation of $P_{j_n}$, which we refer as SNR reordering. Without loss of generality, let the channel gains from source-to-MUs be sorted as $|h_{1,n}|>|h_{2,n}|>\cdots>|h_{M,n}|$ over a subcarrier $n$ such that user 1 is assigned the subcarrier and user 2 is the eavesdropper. While optimizing source and jammer powers jointly, any update in $P_{s_n}$ does not disturb the resource assignments, but as $P_{j_n}$ is updated it raises the following two concerns:\\
(i) User 1 may not have the maximum SNR.\\
(ii) User 2 may not remain the corresponding eavesdropper.\\
Because of this SNR reordering challenge, for every $P_{j_n}$ update a new main user and the corresponding eavesdropper are to be determined. %for each subcarrier $n$.
This is the problem introduced by the $\max$ operator appearing in the rate definition  
(\ref{secure_rate_definition}), which makes the joint source and jammer power allocation a tedious task even after $\{\pi_{m,n}\}$ and $\{\pi_{j_n}\}$ allocation.
In order to address this challenge we develop a strategy as described below.

\subsubsection{Bounds on jammer power to avoid SNR reordering}\label{subsubsec_ri_bounds}
In order to handle the $\max$ operator, we enforce certain constraints over jammer power so as to retain the same main user and the same eavesdropper throughout the jammer power allocation.
In order to retain the eavesdropper, the rate of user $e$ should be larger 
than all other possible eavesdroppers, i.e., 
\begin{align}\label{eqn_jra_SPA_pjn_eve_bound}
 \frac{P_{s_n} |h_{e,n}|^2}{\sigma^2 + P_{j_{n}}|g_{e,n}|^2} > \frac{P_{s_n} |h_{k,n}|^2}{\sigma^2 + P_{j_{n}}|g_{k,n}|^2}, \text{ }k \in \{1,\cdot \cdot M \} \setminus \{ m,e \}.
\end{align}
Similarly, to preserve the main user we need to have: 
\begin{equation}\label{eqn_jra_SPA_pjn_main_eve_bound}
 \frac{P_{s_n} |h_{m,n}|^2}{\sigma^2 + P_{j_{n}}|g_{m,n}|^2} > \frac{P_{s_n} |h_{e,n}|^2}{\sigma^2 + P_{j_{n}}|g_{e,n}|^2}.
\end{equation}
The constraints in (\ref{eqn_jra_SPA_pjn_eve_bound}) and (\ref{eqn_jra_SPA_pjn_main_eve_bound}) depend on channel conditions only, which result in lower and upper bounds over $P_{j_n}$ such that $P_{j_{k,n}}^u>P_{j_n}>P_{j_{k,n}}^l$. Since (\ref{eqn_pjn_rate_imp_cond}) enforces another upper bound on $P_{j_n}$, the final lower and upper bounds  are given as 
\begin{align}\label{jammer_power_bounds}
P_{j_{n}}^l = \underset{k} \max \{P_{j_{k,n}}^l\}; \quad P_{j_{n}}^u = \underset{k} \min \{P_{j_{n}}^{th_i}, P_{j_{k,n}}^u\}
\end{align}
\begin{remark}
Since $|h_{m,n}|$ $>$ $|h_{e,n}|$ $>$ $|h_{k,n}|$, hence (\ref{eqn_jra_SPA_pjn_eve_bound}) and (\ref{eqn_jra_SPA_pjn_main_eve_bound}) may result only in upper bounds, and lower bounds are zero. In case $P_{j_n}^l>P_{j_{n}}^u$, there is no feasible region for $P_{j_n}$ and subcarrier $n$ cannot be considered for rate improvement.
\end{remark}

\emph{
Example (continued):} 
The SNR reordering issue can be explained with the help of jammer power variation over subcarrier $c_3$. 
The respective SNRs of the users with $P_{j_3}$ variation are presented in Table \ref{carrier_3}.
From $P_{j_3}= 0$ to $0.4$, $u_1$ and $u_2$ are respectively the main user and the eavesdropper. At $P_{j_3} = 0.5$, the eavesdropper changes from $u_2$ to $u_3$ and this reduces the secure rate $R_{1,3}$ from $0.0616$ to $0.0315$.  When $P_{j_3}$ increases to $0.7$, even the main user changes from $u_1$ to $u_3$, and now $u_1$ is the corresponding eavesdropper. Now, the secure rate of $u_3$ is $0.0048$, while that of $u_1$ is $0.0$. Though the jammer power threshold $P_{j_n}^{th_i}$ guarantees that, if $P_{j_n}<P_{j_n}^{th_i}$, the secure rate remains greater than that without jammer ($P_{j_n}=0$), this assertion relies on the basic assumption that the main user and the eavesdropper do not change. The jammer power threshold, optimum jammer power, and its upper bound to avoid SNR re-ordering on $c_3$ are respectively $P_{j_3}^{th_i} =  0.9560$, $P_{j_3}^o = 0.0808$, and $P_{j_{3,3}}^{u} = 0.4013$.
As soon as 
$P_{j_{3}}>P_{j_{3}}^{u} = \min\{P_{j_3}^{th_i},P_{j_{3,3}}^{u}\} = 0.4013$, 
the SNR order changes and $R_{1,3}$ is also reduced.
This issue did not arise in $c_2$ as there was no finite jammer power upper bound $P_{j_{1,2}}^{u}$. \hfill\textbf{$\Box$}

After decision of $\pi_{m,n}$ and $\pi_{j_n}$, combinatorial aspect of the problem is resolved if $P_{j_n}$ remains within bounds. Next we consider the joint source and jammer power allocation.

% 
% setting a new column type that will have all columns of equal width
%-------------------------------------------------------------
%\newcolumntype{C}{>{\centering\arraybackslash}p{3em}}
%\newcolumntype{L}[1]{>{\raggedright\let\newline\\\arraybackslash}p{#1}}
%-------------------------------------------------------------
\begin{table}[!htb]
\caption{Users' SNRs and secure rate $R_{1,3}$ versus $P_{j_3}$}\label{carrier_3}
\centering
{\scriptsize
%\begin{tabular}{|C| |C| C| |C| C|}
\begin{tabular}{|L{1cm} |L{1cm}| L{1cm}| L{1cm}| L{1cm}| L{1cm}| L{1cm}|}
\hline
{$P_{j_3}\to $} &    {$0.0$} & {$0.1$} & {$0.4$} & {$0.5$} & {$0.7$} \\
\hline
{$SNR_{1}$} & {0.4514} & {0.2086} & {0.0798} & {0.0662} & {0.0493} \\
\hline
{$SNR_{2}$} & {0.4301} & {0.1602} & {0.0556} & {0.0456} & {0.0336} \\
\hline
{$SNR_{3}$} & {0.0624} & {0.0605} & {0.0554} & {0.0539} & {0.0512} \\
\hline
{$R_{1,3}$} & {0.0328} & {0.1020} & {0.0616} & {0.0315} & {0.0000} \\
\hline
\end{tabular}
}
\end{table}

\subsection{Joint optimization of source and jammer power }\label{subsec_ri_jpa}
After the decision on utilization of jammer power based on Proposition \ref{prop_sri_jammer_rate_imp_constraints}, all subcarriers can be categorized in two sets: $\{ \mathcal{J}_0 \}$ - the ones that do not use jammer power ($\pi_{j_n}=0$), and $\{\mathcal{J}_1 \}$ - the others that use jammer power ($\pi_{j_n}=1$). 
Let $\gamma_{m,n}$ and $\gamma_{m,n}'$ respectively  denote the SNRs of the user $m$, without jammer power and with jammer power over subcarrier $n$, i.e., 
\begin{align}\label{eqn_jra_sja_snr_definition}
 \gamma_{m,n} = \frac{P_{s_n} |h_{m,n}|^2}{\sigma^2}; \quad  \gamma_{m,n}' = \frac{P_{s_n} |h_{m,n}|^2}{\sigma^2 + P_{j_n} |g_{m,n}|^2} .
\end{align}
The joint power allocation problem can be stated as
\begin{align}\label{opt_prob_jra_SPA_src_jamm_power}
& \underset{P_{s_x}, P_{s_y}, P_{j_y}}{\text{maximize}}
\left\{ \sum_{x \in \mathcal{J}_0} w_x'\left[ \log_2 ( 1+\gamma_{m,x} ) - \log_2 ( 1+\gamma_{e,x} )  \right] \right. \nonumber \\ 
& \left. \qquad + \sum_{y \in \mathcal{J}_1} w_y' \left[ \log_2 ( 1+ \gamma_{m,y}') - \log_2 ( 1+ \gamma_{e,y}' )  \right] \right \} \nonumber \\
& \text{subject to} \nonumber \\ 
& C_{2,1}: \sum_{x \in \mathcal{J}_0} P_{s_x} + \sum_{y \in \mathcal{J}_1} P_{s_y} \leq P_S,  \quad C_{2,2}: \sum_{y \in \mathcal{J}_1} P_{j_y} \leq P_J,\nonumber \\
& C_{2,3}: P_{j_y} > P_{j_{y}}^l \text{ } \forall \text{ } y \in \mathcal{J}_1, \quad C_{2,4}: P_{j_y} < P_{j_{y}}^u \text{ } \forall \text{ } y \in \mathcal{J}_1,\nonumber \\
& C_{2,5}: P_{s_x}\ge 0, P_{s_y}\ge 0 \text{ } \forall \text{ } x\in \mathcal{J}_0,y\in \mathcal{J}_1 
\end{align}
where $w_n'$ is the weight of the main user $m$ over the $n$th subcarrier.
$C_{2,1}$ and $C_{2,2}$ are the power budget constraints. 
Constraints $C_{2,3}$ and $C_{2,4}$ are imposed to tackle the SNR re-ordering challenge (cf. Section \ref{subsubsec_ri_bounds}), while the boundary constraints for source power are captured in $C_{2,5}$. 

The source power has to be shared among the subcarriers' set $\{ \mathcal{J}_0 \}$ and $\{ \mathcal{J}_1 \}$, while the jammer power has to be allocated on the subcarriers of set $\{ \mathcal{J}_1 \}$ only.
We observe that source power is a coupling variable between the power allocation problem over the complementary sets $\{ \mathcal{J}_0 \}$ and $\{ \mathcal{J}_1 \}$, and $C_{2,1}$ is the corresponding complicating  constraint.
Since the secure rate over a subcarrier is an increasing function of source power, the problem has to be solved at full source power budget $P_S$.
The optimization problem (\ref{opt_prob_jra_SPA_src_jamm_power}) can be solved using 
\emph{primal decomposition} (PD) procedure by dividing it into one master problem (outer loop) and two subproblems (inner loop) \cite{sboyd_decomposition}.
The first subproblem is source power allocation over set $\{ \mathcal{J}_0 \}$, and second is joint source and jammer power allocation over set $\{ \mathcal{J}_1 \}$.
The subproblem-1 can be stated as: \\%follows: \\
\textbf{\textit{Subproblem-1}}
\begin{align}\label{opt_prob_jra_src_power_p1}
& \underset{P_{s_x}}{\text{maximize}}
\sum_{x \in \mathcal{J}_0} w_x' \left[ \log_2  ( 1+\gamma_{m,x} ) - \log_2 ( 1+\gamma_{e,x} )  \right] \nonumber \\
& \text{subject to} \nonumber \\ 
& C_{3,1}: \sum_{x \in \mathcal{J}_0} P_{s_x} \leq t, \quad C_{3,2}: P_{s_x}\ge 0 \text{ } \forall \text{ } x \text{ } \in  \mathcal{J}_0
\end{align}
where $t$, representing the coupling variable, is the source power budget allotted to subproblem-1.
The source power budget to subproblem-2 is $P_S-t$, as described in the following:\\
\textbf{\textit{Subproblem-2}}
\begin{align}\label{opt_prob_jra_SPA_src_jamm_power_p2}
& \underset{P_{s_y}, P_{j_y}}{\text{maximize}}
\sum_{y \in \mathcal{J}_1} w_y' \left[ \log_2 ( 1+ \gamma_{m,y}') - \log_2 ( 1+ \gamma_{e,y}' ) \right] \nonumber\\
& \text{subject to} \nonumber \\ 
& C_{4,1}: \sum_{y \in \mathcal{J}_1} P_{s_y} \leq P_S - t, \quad C_{4,2}: \sum_{y \in \mathcal{J}_1} P_{j_y} \leq P_J, \nonumber \\
& C_{4,3}: P_{j_y} > P_{j_{y}}^l \text{ } \forall \text{ } y \in \mathcal{J}_1, \quad   C_{4,4}: P_{j_y} < P_{j_{y}}^u \text{ } \forall \text{ }y \in \mathcal{J}_1, \nonumber \\
& C_{4,5}: P_{s_y}\ge 0 \text{ } \forall \text{ } y \in \mathcal{J}_1.
\end{align}

The two subproblems are solved independently for a fixed $t$. Once the solutions of both the subproblems are obtained, the master problem is solved using subgradient method by updating the coupling variable $t$ as $t := t - \xi(\lambda_2-\lambda_1)$, where $\xi$ is an appropriate step size \cite{sboyd_subgradient} and $\lambda_1$, $\lambda_2$ are the Lagrange multipliers \cite{sboyd_decomposition} corresponding to the source power constraints in subproblems (\ref{opt_prob_jra_src_power_p1}) and (\ref{opt_prob_jra_SPA_src_jamm_power_p2}), respectively. This procedure of updating $t$ is repeated until either the sum secure rate saturates or the iteration count exceeds a threshold.

\subsubsection{Solution of subproblem-1} 
\label{subsubsec_ri_alloc_ps}
Since each subcarrier is allocated to the best gain user (cf. Section \ref{subsec_subc_alloc_src}), $|h_{m,x}| > |h_{e,x}|$ $\forall$ $x$ $\in$ $\{ \mathcal{J}_0 \}$. Thus, the objective function in (\ref{opt_prob_jra_src_power_p1}) is a concave function of $P_{s_{x}}$. The optimal $P_{s_{x}}^*$ obtained after solving the Karush-Kuhn-Tucker (KKT) conditions is given as
\begin{align}\label{eqn_woj_opt_src_power_allocation}
P_{s_{x}}^* = f \left( \frac{\sigma^2}{|h_{m,x}|^2},\frac{\sigma^2}{|h_{e,x}|^2},\frac{4w_x'}{\lambda_1 \ln 2} \right)
\end{align}
where $f$ is defined as follows: 
\begin{align}\label{function_definition}
f(\eta, \nu, \kappa) = - \frac{1}{2} \left[ \left(\nu+\eta\right) - \sqrt{\left(\nu-\eta\right)^2 + \kappa\left(\nu-\eta\right)  }\right]^+
\end{align}
$\lambda_1$, associated with $C_{3,1}$, is found such that $\sum_{x \in \mathcal{J}_0} P_{s_x} = t$.

\subsubsection{Solution of subproblem-2}\label{subsubsec_ri_alloc_ps_pj}
In order to solve the subproblem-2, we observe that the secure rate over the subcarriers of set $\{\mathcal{J}_1\}$ is a concave function of source power $P_{s_y}$ for a fixed jammer power $P_{j_y}$, and also a quasi-concave function of $P_{j_y}$ for a fixed $P_{s_y}$ as shown in Proposition \ref{prop_sri_jammer_rate_imp_constraints}. This motivates us to use the method of \emph{alternating optimization} (AO) \cite{sboyd_alternating} for joint power optimization, which alternates between source and jammer power optimizations. 
The AO method starts with $P_{s_y} = \frac{P_S-t}{|\mathcal{J}_1|}$, i.e., equal power over all the subcarriers of set $\{ \mathcal{J}_1 \}$, and for the known $P_{s_y}$ optimal $P_{j_y}$ allocation is done. For a fixed source power, the jammer power allocation problem can be stated as:
\begin{align}\label{opt_prob_jra_SPA_src_jamm_power_p2_eq_ps}
& \underset{P_{j_y}}{\text{maximize}}
\sum_{y \in \mathcal{J}_1} w_y' \left[ \log_2 ( 1+ \gamma_{m,y}') - \log_2 ( 1+ \gamma_{e,y}' )  \right] \nonumber \\
& \text{subject to} \nonumber \\ 
& C_{5,1}: \sum_{y \in \mathcal{J}_1} P_{j_y} \leq P_J, \qquad C_{5,2}: P_{j_y} > P_{j_{y}}^l \text{ } \forall \text{ } y \in \mathcal{J}_1, \nonumber  \\
& C_{5,3}: P_{j_y} < P_{j_{y}}^u \text{ } \forall \text{ } y \in \mathcal{J}_1.
\end{align}
Since the secure rate over a subcarrier $y \in \{ \mathcal{J}_1 \}$ achieves a maxima at a unique jammer power $P_{j_y}^{o}$ (cf. Proposition \ref{prop_sri_jammer_rate_imp_constraints}). So, first we evaluate optimal $P_{j_y}^\star \forall y$ for achieving maximum secure rate, respecting the jammer power bounds as follows:
\begin{align}\label{opt_jammer_power_in_bounds}
P_{j_y}^\star=
\begin{cases}
P_{j_{y}}^l + \delta, & \text{if $P_{j_y}^{o} < P_{j_{y}}^l$} \\
P_{j_{y}}^u - \delta, & \text{if $P_{j_y}^{o} > P_{j_{y}}^u$} \\
P_{j_y}^{o}, & \text{otherwise}
\end{cases} 
\end{align}
where $\delta$ is a small positive number. Note that, in secure rate improvement $P_{j_{y}}^l=0$ and $P_{j_y}^{o}>0$, so first case does not arise.
If the condition $\sum_{y \in \mathcal{J}_1} P_{j_y}^\star \leq P_J$ is satisfied, then optimum jammer power is allocated over all subcarriers, i.e., $P_{j_y} = P_{j_y}^\star \forall y \in \{ \mathcal{J}_1 \}$.
Otherwise the problem (\ref{opt_prob_jra_SPA_src_jamm_power_p2_eq_ps}) is solved under jammer power budget constraint.
The partial Lagrangian of the problem (\ref{opt_prob_jra_SPA_src_jamm_power_p2_eq_ps}) is given in (\ref{eqn_jra_ao_p2_lagrange_function}).
\begin{figure*}[!t]
\begin{align}\label{eqn_jra_ao_p2_lagrange_function}
L(P_j, \mu ) = \sum_{y \in \mathcal{J}_1} w_y' \left[ \log_2 \left( 1+\frac{P_{s_y}|h_{m,y}|^2}{\sigma^2 + P_{j_y} |g_{m,y}|^2} \right) 
 - \log_2 \left( 1+\frac{P_{s_y} |h_{e,y}|^2}{\sigma^2 + P_{j_y} |g_{e,y}|^2} \right)  \right] 
+ \mu \left( P_J - \sum_{y \in \mathcal{J}_1} P_{j_y}  \right) 
\end{align}
\hrulefill
\end{figure*}
After setting first-order derivative of the Lagrangian in (\ref{eqn_jra_ao_p2_lagrange_function}) equal to zero, we obtain a fourth-order nonlinear equation in $P_{j_y}$ having following form
\begin{align}\label{eqn_jra_ao_p2_non_linear_equation}
& \left( \frac{\mu \ln 2}{w'_y} \right)  \left( a_yP_{j_y}^4 + b_y P_{j_y}^3 + c_y P_{j_y}^2 + d_y P_{j_y} + e_y \right) \nonumber \\
& \qquad \qquad = P_{s_y} \left( c_y' P_{j_y}^2 + d_y' P_{j_y} + e_y' \right). 
\end{align}

The coefficients of the equation (\ref{eqn_jra_ao_p2_non_linear_equation}) are tabulated in Table \ref{non_linear_equation_coefficients}. In the domain $[0,P_{j_y}^\star)$, the secure rate is a concave increasing function (cf. Proposition \ref{prop_sri_jammer_rate_imp_constraints}) and since  $\sum_{y \in \mathcal{J}_1} P_{j_y}^\star > P_J$, depending on $\mu$, there exists a single positive real root $P_{j_y}^{r} ( < P_{j_y}^{\star})$ of (\ref{eqn_jra_ao_p2_non_linear_equation}). After obtaining $P_{j_y}^{r}$ for a fixed value of $\mu$, $\mu$ is updated using the subgradient method \cite{sboyd_subgradient}. 
For constrained jammer power budget, optimal jammer power $P_{j_y}^{\diamond}$ is obtained after constraining $P_{j_y}^{r}$ between jammer power bounds as in (\ref{opt_jammer_power_in_bounds}).
%raviac Thus, jammer power allocation for fixed source power $P_{s_y}$ can be written as 
Thus, $P_{j_{y}}$ allocation for fixed $P_{s_y}$ can be written as: 
\begin{equation}\label{jammer_power_allocation_subpoblem_2}
P_{j_{y}}=
\begin{cases}
P_{j_y}^{\star}, & \text{if } \sum_{y \in \mathcal{J}_1} P_{j_y}^{\star} \leq P_J\\
P_{j_y}^{\diamond}, & \text{otherwise.}
\end{cases} 
\end{equation}
For a known $P_{j_{y}}$, the AO method now obtains $P_{s_{y}}$ based on (\ref{eqn_woj_opt_src_power_allocation}) with the source power's coefficient being $\frac{|h_{m,y}|^2}{\sigma^2 + P_{j_y}|g_{m,y}|^2}$, instead of $\frac{|h_{m,y}|^2}{\sigma^2}$. This completes one iteration of the AO method. This procedure of optimizing source and jammer powers alternatively continues until either the secure rate saturates or the iteration count exceeds a threshold. A PD and AO based Joint Power Allocation (JPA) scheme for weighted sum secure rate maximization is presented in Algorithm \ref{global_optimization_algo_rate_imp}. 
\begin{remark}
During power optimization in AO, if the jammer power over some of the subcarriers in $\{ \mathcal{J}_1 \}$ is zero, then these subcarriers are taken out of $\{ \mathcal{J}_1 \}$ and added back into $ \{ \mathcal{J}_0 \}$ for source power allocation in the next iteration of PD procedure.
\end{remark}

\begin{table*}[!tbh]
\renewcommand{\arraystretch}{1.}
\caption{Coefficients of the non linear equation (\ref{eqn_jra_ao_p2_non_linear_equation})} \label{non_linear_equation_coefficients}
\centering
\begingroup\makeatletter\def\f@size{9}\check@mathfonts
\begin{tabular}{l}
\hline
${a_y}$ = $\left( |g_{e,y}|^2 |g_{m,y}|^2 \right)^2$; \text{ }
${b_y}$ = $|g_{e,y}|^2 |g_{m,y}|^2 \left\{ 2\sigma^2 \left( |g_{e,y}|^2 + |g_{m,y}|^2 \right) + P_{s_y} \left(|g_{m,y}|^2 |h_{e,y}|^2 + |g_{e,y}|^2 |h_{m,y}|^2 \right) \right\}$; \\
${c_y}$ = $ P_{s_y}^2 |g_{e,y}|^2 |g_{m,y}|^2 |h_{e,y}|^2 |h_{m,y}|^2 + P_{s_y} \sigma^2 \left\{|g_{m,y}|^2|h_{e,y}|^2 \left( |g_{m,y}|^2 + 2|g_{e,y}|^2 \right)  + |g_{e,y}|^2 |h_{m,y}|^2 \left( |g_{e,y}|^2 + 2|g_{m,y}|^2 \right)  \right\}$ \\
$+ \sigma^4 \left\{ \left( |g_{e,y}|^2 \right) ^2 + \left( |g_{m,y}|^2 \right) ^2 +4|g_{e,y}|^2|g_{m,y}|^2 \right\}$; \\
${d_y}$ = $\left( |g_{e,y}|^2 + |g_{m,y}|^2 \right) \left\{ 2\sigma^4 + P_{s_y}\sigma^2 \left( |h_{m,y}|^2 + |h_{e,y}|^2 \right)  + P_{s_y}^2|h_{m,y}|^2|h_{e,y}|^2 \right\}  + P_{s_y}\sigma^4 \left(|g_{m,y}|^2 |h_{e,y}|^2 + |g_{e,y}|^2 |h_{m,y}|^2 \right)$; \\
${e_y}$ = $\sigma^4 \left( \sigma^2+P_{s_y}|h_{e,y}|^2 \right) \left( \sigma^2+P_{s_y}|h_{m,y}|^2 \right)$; \text{ }
${c_y'}$ = $|g_{e,y}|^2 |g_{m,y}|^2 \left(|g_{m,y}|^2 |h_{e,y}|^2 - |g_{e,y}|^2 |h_{m,y}|^2 \right)$; \\
${d_y'}$ = $2|g_{e,y}|^2|g_{m,y}|^2\sigma^2 \left( |h_{e,y}|^2-|h_{m,y}|^2 \right)$; \text{ }
${e_y'}$ = $\sigma^4 \left( |g_{e,y}|^2|h_{e,y}|^2-|g_{m,y}|^2|h_{m,y}|^2 \right) + \sigma^2P_{s_y}|h_{e,y}|^2|h_{m,y}|^2 \left( |g_{e,y}|^2-|g_{m,y}|^2 \right)$;\\
\hline
\end{tabular}
\endgroup
\end{table*}
  
\subsubsection{Convergence of joint power allocation} 
The source power allocation is done through PD with $(\lambda_2-\lambda_1)$ as the subgradient. The secure rate is a concave increasing function of source power in subproblem-1 and for a fixed jammer power, in subproblem-2. Thus, $\lambda_1$ and $\lambda_2$ are positive and bounded. Hence, $(\lambda_2-\lambda_1)$ is bounded and the method converges to an $\epsilon$-optimal value in finite number of steps \cite{sboyd_subgradient}.

The authors in \cite{HMWANG_TIFS_2014} have shown with the help of Bolzano-Weirerstrass theorem and \cite{Grippo_ORL_2000} that, AO method converges for a problem which is concave in one set of variable and quasi-concave with unique maxima in another set. Hence, the joint power allocation based on PD and AO converges. 

%--------------------------

\begin{algorithm}
{\small 
\caption{Proposed Sum Secure Rate Maximization}
\label{global_optimization_algo_rate_imp}
\begin{algorithmic}[1]
\Procedure{}{}
\State \noindent\textbf{Subcarrier Allocation:} 
\For {every subcarrier $n$}
\State find $k = \arg \underset{m \in \left\{ 1,M \right\} } \max |h_{m,n}|$ 
\EndFor
\State \noindent\textbf{Subcarrier set creation:}
\State {Init} $\mathcal{J}_0 = \phi$ and $\mathcal{J}_1 = \phi$
\For {every subcarrier $n$} 
\If{$(|g_{e,n}|^2>|g_{m,n}|^2)$}
\State compute $P_{s_{n}}^{th_i}$ and $P_{j_{n}}^{th_i}$
\State $\mathcal{J}_1 = \mathcal{J}_1 + n$ 
\Else
\State $\mathcal{J}_0 = \mathcal{J}_0 + n$ 
\EndIf
\EndFor
\State \noindent\textbf{Source and Jammer Power Allocation:}
\State \underline{primal decomposition begins}
\State {Init} $t$, $iter_o = 0$ 
\State compute $P_{s_x} \text{ } \forall \text{ }x \text{ } \in \text{ } \mathcal{J}_0$ for $P_S = t$ according to (\ref{eqn_woj_opt_src_power_allocation})
\State \underline{alternate optimization begins}
\State {Init} $ iter_i = 0; P_{s_y} = \frac{P_S-t}{| \mathcal{J}_1 |} \text{ } \forall \text{ }y \text{ } \in \text{ } \mathcal{J}_1$
\State compute $P_{s_{y}}^{th_i}, P_{j_{y}}^{th_i}$, $P_{j_{y}}^{\diamond}$ and $P_{j_{y}}^{\star} \text{ } \forall \text{ }y \text{ } \in \text{ } \mathcal{J}_1$ 
\State allocate $P_{j_y}$ according to (\ref{jammer_power_allocation_subpoblem_2})
\State allocate $P_{s_y}$ with $P_S = P_S-t$ according to (\ref{eqn_woj_opt_src_power_allocation})
\State $iter_i = iter_i + 1$
\State iterate till either rate improves or $iter_i \leq iter_i^{max}$
\State \underline{alternate optimization ends}
\State update $t$
\State $iter_o = iter_o + 1;$
\State iterate till either rate improves or $iter_o \leq iter_o^{max}$
\State \underline{primal decomposition ends}
\EndProcedure
\end{algorithmic}
}
\end{algorithm}

\subsection{Solution with reduced complexity for rate maximization}
\label{easy_sol_rate_max}
The complexity involved in joint source and jammer power optimization (discussed in Section \ref{complexity_analysis_rate_imp}) can be reduced to a large extent by allocating source and jammer powers sequentially instead \cite{ravikant_ICC_2015}.
Initially assuming the jammer to be absent, subcarrier allocation is done using (\ref{subcarrier_alloc_policy}) and the source power optimization is done using (\ref{eqn_woj_opt_src_power_allocation}).
After creating subcarrier set $\{ I_m \}$ of user $m$ (using (\ref{subcarrier_alloc_jammer})), over which the secure rate can be improved, the resource allocation problem is attended with jammer power allocation.
The optimization problem of $P_{j_{m,n}}$ (to aid user $m$ over subcarrier $n$) can be written as:
\begin{align}\label{subopt_jammer_power_allocation_problem}
& \underset{P_{j_{m,n}}}{\text{maximize}}
  \sum_{m=1}^M w_m \left[ \sum_{n \in I_m} \left\{ \log_2 \left( \frac{1+\frac{P_{s_n} |h_{m,n}|^2}{\sigma^2 + P_{j_{m,n}} |g_{m,n}|^2}} {1+\frac{P_{s_n} |h_{e,n}|^2}{\sigma^2 + P_{j_{m,n}} |g_{e,n}|^2}} \right) \right\} \right] \nonumber \\
& \text{subject to}  \nonumber \\
& C_{6,1}: \sum_{m=1}^M \sum_{n \in I_m}  P_{j_{m,n}} \leq P_J, \quad C_{6,2}: P_{j_{m,n}} < P_{j_{m,n}}^u \forall \text{ } n \in I_m  \nonumber \\
& C_{6,3}: P_{j_{m,n}} > P_{j_{m,n}}^l  \forall \text{ } n  \in  I_m
\end{align}
where $P_{j_{m,n}}^l$ and $P_{j_{m,n}}^u$ are the lower and upper bounds on $P_{j_{m,n}}$, respectively.
If $x>y$, the inequality $\frac{1+x}{1+y}<\frac{x}{y}$ can be used to convert the objective function into a concave function which upper bounds the achievable secure rate of (\ref{subopt_jammer_power_allocation_problem}).

\begin{figure*}[!t]
\begin{align}\label{subopt_lagrange_function}
L(P_j, \lambda) = 
& \sum_{m=1}^M w_m \left[ \sum_{n \in I_m} \left\{ \log_2 \left( \frac{|h_{m,n}|^2}{\sigma^2 + P_{j_{m,n}}|g_{m,n}|^2} \right)
 - \log_2 \left( \frac{|h_{e,n}|^2}{\sigma^2 + P_{j_{m,n}}|g_{e,n}|^2} \right) \right\} \right]
+ \lambda(P_J - \sum_{m=1}^M \sum_{n \in I_m}  P_{j_{m,n}})
\end{align}
\hrulefill
\end{figure*}

The partial Lagrangian of the simplified problem is given in (\ref{subopt_lagrange_function}). Setting the first derivative of the Lagrangian equal to zero, we obtain a quadratic equation in $P_{j_{m,n}}$ as:
\begin{align}\label{rate_imp_quadratic}
 P_{j_{m,n}}^2 |g_{m,n}|^2 |g_{e,n}|^2 +  P_{j_{m,n}} \sigma^2 \left( |g_{m,n}|^2 + |g_{e,n}|^2 \right) \nonumber \\
 + \sigma^4  - \sigma^2 \left( \frac{w_m}{\lambda \ln 2} \right) \left( |g_{e,n}|^2 - |g_{m,n}|^2 \right)  = 0.
\end{align}
The solution $P_{j_{m,n}}^*$ of (\ref{rate_imp_quadratic}) is obtained as 
\begin{align}\label{sub_opt_jammer_power}
P_{j_{m,n}}^* = f \left( \frac{\sigma^2}{|g_{e,n}|^2},\frac{\sigma^2}{|g_{m,n}|^2},\frac{4w_m}{\lambda \ln 2} \right)
\end{align}
where $f$ is defined in (\ref{function_definition}) and $\lambda$ is obtained such that $\sum_{m=1}^M \sum_{n \in I_m}  P_{j_{m,n}}^* = P_J$. This scheme works fine till $\sum_{m=1}^M \sum_{n \in I_m} P_{j_{m,n}}^u$ $>$ $P_J$. When $\sum_{m=1}^M \sum_{n \in I_m} P_{j_{m,n}}^u$ $\leq$ $P_J$, the scheme allocates $P_{j_{m,n}}$ $=$ $P_{j_{m,n}}^u - \delta$ which results in degraded performance. In order to solve this issue we allocate $P_{j_{m,n}}$ $=$ $(P_{j_{m,n}}^l+P_{j_{m,n}}^u)/2$. Thus, $P_{j_{m,n}}$ is allocated as:
\begin{align}\label{jammer_power_allocation}
P_{j_{m,n}}=
\begin{cases}
P_{j_{m,n}}^*, & \text{if } \sum_{m=1}^M\sum_{n \in I_m} P_{j_{m,n}}^u > P_J\\
\frac{P_{j_{m,n}}^l + P_{j_{m,n}}^u}{2}, & \text{otherwise.}
\end{cases} 
\end{align}
The suboptimal scheme 
actually  maximizes the upper bound of the jammer power allocation problem (\ref{subopt_jammer_power_allocation_problem}).
As anticipated the suboptimal solution referred as JPASO has degraded secure rate 
performance compared to JPA as shown in Section \ref{sec_results}. 

\subsection{Complexity analysis}\label{complexity_analysis_rate_imp}
Here we analyze complexity of the proposed algorithms for secure rate improvement.
The computation complexity of jammer power upper and lower bound for each subcarrier is $O(NM)$. 
The PD procedure first optimizes source power over $N_1(=|\mathcal{J}_0|)$ subcarriers that do not utilize jammer power, in $N_1\log(N_1)$  computations \cite{W_Yu_TCOM_2006}.
The AO procedure optimizes source as well as jammer power over the leftover $N_2=N-N_1$ subcarriers alternatively in $N_2 \log(N_2)$ computations.  
Let us denote that AO takes $I_{ao}$ iterations and PD takes $I_{pd}$ subgradient updates for convergence. Then, the total complexity of JPA can be obtained as $O(NM + I_{pd} N_1 \log(N_1) + I_{ao} I_{pd} N_2 \log(N_2))$.
The worst case computational complexity %of the algorithm, 
when $N_2 \simeq N$, is $O(I_{ao} I_{pd} N \log(N))$.
JPASO initially optimizes source power over all the subcarriers in $O(N \log(N))$ computations and then optimizes jammer power over $N_2$ subcarriers with complexity $O(N_2 \log(N_2))$.
Thus, the worst case complexity of JPASO is $O(N \log(N))$.

\section{Fair resource allocation}
\label{sec_max_min_fairness}
Allocating a subcarrier to its best gain user is an optimal sum rate maximization strategy in non-secure OFDMA systems \cite{Rhee2000}, \cite{Jang2003}. 
In contrast, secure OFDMA with untrusted users in the absence of jammer has in general no other option but to allocate a subcarrier to its best gain user.
This may cause some users to starve for resources and may lead to poor fairness performance.
The concept of generalized $\alpha$-fairness was introduced in \cite{Altman_alpha_2008}, where $\alpha=0$ corresponds to sum rate maximization, $\alpha=1$ corresponds to proportional fairness, and $\alpha \to \infty$ leads to max-min fairness. 
With appropriate selection of the user priority weights, the performance of the first two scenarios, i.e., with $\alpha=0$ and $\alpha=1$, can be captured by the sum rate maximization strategy discussed in Section \ref{sec_sum_rate_imp}. 
In this section we explicitly look at the inherent challenges associated with max-min fair resource allocation in secure OFDMA.

\subsection{Max-min fairness scheme for secure OFDMA}
In secure OFDMA the max-min fair resource allocation problem in presence of jammer can be stated as follows:
\begin{align}\label{max_min_optimization_problem}
& \underset{P_{s_n}, P_{j_n}, \pi_{m,n}, \pi_{j_n}} {\text{maximize}} \underset{m}\min \sum_{n=1}^N \pi_{m,n} R_{m,n} \nonumber \\
& \text{subject to}  
 \quad C_{1,1}, C_{1,2}, C_{1,3}, C_{1,4}, C_{1,5}, C_{1,6} \text{ as in } (\ref{opt_prob_jra_rate_max}). 
\end{align} 
In non-secure OFDMA, because of high computation complexity involved in finding the optimal solution, the authors in \cite{Rhee2000, Mohanram_CL_2005} proposed suboptimal solutions for the max-min problem.
In secure OFDMA with untrusted users and a jammer, complexity of the problem increases further due to the presence of $\max$ operator in secure rate definition (cf. (\ref{secure_rate_definition})) and the addition of two variables, namely $\pi_{j_n}$ and $P_{j_n}$. 
Also, the conventional max-min algorithm \cite{Rhee2000} cannot be directly used in secure OFDMA because the algorithm may get stuck in an infinite loop. 
In other words, if a least-rate user does not have maximum SNR over any of the leftover subcarriers, this user gets the algorithm into a deadlock state in the `while loop', as its rate cannot be improved.
In order to address the deadlock problem associated with the conventional max-min fairness in our considered jammer assisted secure OFDMA, we propose a modified max-min fairness, which uses a novel concept of subcarrier snatching discussed below.

\subsection{Subcarrier snatching and maximum rate achievability}
In non-secure OFDMA, a user is poor if its average channel gain is relatively low compared to other users. 
The source can help such users by either implementing max-min fairness strategy, or transmitting at higher power, or using both. 
In secure OFDMA a poor user is the one who, in-spite of having good channel gain on main channel, has a very strong eavesdropper causing its secure rate close to zero.
This user cannot be helped much by increasing the source transmission power  
because rate of the eavesdropper also increases with source power.
Thus, with randomly distributed users, source has very limited maneuverability to help such a poor user.
Below, we introduce the concept of subcarrier snatching with the help of jammer, which is utilized in the proposed max-min fair resource allocation (described in Section \ref{proposed_max_min_scheme}). 

Let us reconsider an OFDMA system consisting of four nodes: source, jammer, and users $m$ and $e$. Let us assume that over a subcarrier $n$ gain of $e$ is higher than that of $m$ i.e., $|h_{e,n}|>|h_{m,n}|$. Originally this subcarrier should be allocated to best gain user $e$, but jammer power can be utilized to snatch this subcarrier from user $e$ and allocate to user $m$. The following Proposition describes the conditions of subcarrier snatching and the existence of unique maxima with $P_{j_n}$. %raviac with jammer power. 

\begin{proposition}\label{prop_max_min_subcarrier_snatching}
In secure OFDMA, a user $m$ can snatch subcarrier $n$ from its best gain user $e$ 
when channel gains are such that $|g_{e,n}|^2 |h_{m,n}|^2 > |g_{m,n}|^2 |h_{e,n}|^2 $ and jammer power is above the following threshold:
\begin{align}\label{eqn_pjn_subcarrier_snatching}
P_{j_n} >P_{j_{n}}^{th_s} = \frac {\sigma^2 \left( |h_{e,n}|^2 - |h_{m,n}|^2 \right)} {|g_{e,n}|^2 |h_{m,n}|^2 - |g_{m,n}|^2 |h_{e,n}|^2}.
\end{align}
The secure rate achieved over a snatched subcarrier has a quasi-concave nature, achieving a unique maxima with  $P_{j_n}$.
\end{proposition}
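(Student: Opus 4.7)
The plan is to mirror the approach taken for Proposition~\ref{prop_sri_jammer_rate_imp_constraints}. First I would derive the snatching conditions by demanding $R_{m,n}>0$, i.e.\ that user $m$'s post-jamming SNR exceed user $e$'s:
\begin{equation*}
\frac{P_{s_n}|h_{m,n}|^2}{\sigma^2 + P_{j_n}|g_{m,n}|^2} > \frac{P_{s_n}|h_{e,n}|^2}{\sigma^2 + P_{j_n}|g_{e,n}|^2}.
\end{equation*}
Cross-multiplying by the (positive) denominators and collecting the $P_{j_n}$ terms yields the linear inequality
\begin{equation*}
P_{j_n}\bigl(|g_{e,n}|^2|h_{m,n}|^2 - |g_{m,n}|^2|h_{e,n}|^2\bigr) > \sigma^2\bigl(|h_{e,n}|^2 - |h_{m,n}|^2\bigr).
\end{equation*}
Because $e$ is the best-gain user on subcarrier $n$, the right-hand side is strictly positive, so a positive $P_{j_n}$ can satisfy the inequality \emph{only} when $|g_{e,n}|^2|h_{m,n}|^2 > |g_{m,n}|^2|h_{e,n}|^2$, which is the channel-gain condition in the statement. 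Dividing through by this (now positive) coefficient produces exactly the threshold $P_{j_n}^{th_s}$.

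For the unique-maximum claim, I would fix $P_{s_n}$ and study $R_{m,n}(P_{j_n})$ on the open interval $(P_{j_n}^{th_s},\infty)$. At the left endpoint the two SNRs coincide by construction, so $R_{m,n}=0$. As $P_{j_n}\to\infty$, expanding $\log(1+x)\approx x$ in each logarithm gives
\begin{equation*}
R_{m,n}(P_{j_n})\sim \frac{P_{s_n}}{P_{j_n}\ln 2}\cdot\frac{|g_{e,n}|^2|h_{m,n}|^2-|g_{m,n}|^2|h_{e,n}|^2}{|g_{m,n}|^2|g_{e,n}|^2}\;\to\;0^{+},
\end{equation*}
positivity of the leading coefficient being just a restatement of the snatching condition. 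Since $R_{m,n}$ is continuous and strictly positive on the open interval, it must attain an interior maximum. To upgrade this to a unique maximum (and hence quasi-concavity on a one-dimensional domain, by unimodality), I would differentiate and clear denominators to reduce $\partial R_{m,n}/\partial P_{j_n}=0$ to a polynomial equation in $P_{j_n}$, then argue that it has exactly one positive real root in $(P_{j_n}^{th_s},\infty)$ — equivalently, that the ratio
\begin{equation*}
F(P_{j_n})=\frac{(\sigma^2+P_{j_n}|g_{m,n}|^2)\bigl(\sigma^2+P_{j_n}|g_{m,n}|^2+P_{s_n}|h_{m,n}|^2\bigr)}{(\sigma^2+P_{j_n}|g_{e,n}|^2)\bigl(\sigma^2+P_{j_n}|g_{e,n}|^2+P_{s_n}|h_{e,n}|^2\bigr)}
\end{equation*}
crosses its target level monotonically. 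Together with the vanishing of $R_{m,n}$ at both endpoints, uniqueness of the critical point forces $R_{m,n}$ to be non-decreasing then non-increasing on the admissible interval, which is precisely quasi-concavity with a unique maximizer.

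The hard part will be the uniqueness-of-critical-point step. The derivative is a difference of two non-monotone rational functions, and direct expansion produces a quartic whose coefficients carry mixed signs depending on the relative magnitudes of the $|h|$'s and $|g|$'s, so a naive Descartes-style sign count is awkward. The cleanest route, I expect, is to piggy-back on the template already laid out in Appendix~\ref{sec_appendix_proposition_rate_imp} for Proposition~\ref{prop_sri_jammer_rate_imp_constraints}: that argument establishes quasi-concavity for a rate expression of exactly this algebraic form whenever the linear-in-$P_{j_n}$ positivity condition on the SNR-difference holds, and the only substantive change in the present setting is the reversed inequality $|h_{e,n}|>|h_{m,n}|$, which is absorbed into the direction of the snatching condition rather than into the concavity structure itself. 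Once that machinery is reused, the unique-maximum conclusion follows, completing the proof.
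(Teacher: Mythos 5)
Your proposal is correct and follows essentially the same route as the paper's proof in Appendix~\ref{sec_appendix_proposition_carr_snatch}: the threshold is derived exactly as you describe, and the unique maximum is established by reusing the critical-point quadratic $x_nP_{j_n}^2+y_nP_{j_n}+z_n=0$ from Appendix~\ref{sec_appendix_proposition_rate_imp}, whose coefficients under the snatching conditions satisfy $x_n<0$, $y_n>0$, $z_n>0$, so $x_nz_n<0$ forces exactly one positive real root. The only slips are cosmetic: the critical-point equation is a quadratic, not a quartic (the fourth-order polynomial with all-positive coefficients sits in the denominator of the derivative and never vanishes for $P_{j_n}>0$), and the sign of $y_n$ does flip relative to the rate-improvement case, though uniqueness rests only on $x_n$ and $z_n$ having opposite signs.
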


\begin{IEEEproof}
See Appendix \ref{sec_appendix_proposition_carr_snatch}.
\end{IEEEproof}

Thus, under certain channel conditions, a subcarrier can be snatched from its best gain user after allocating sufficient jammer power. The secure rate over the snatched subcarrier is not a monotonically increasing function of jammer power due to the interplay between channel coefficients, and source and jammer powers. In fact, there exists an optimal jammer power achieving maximum secure rate over the snatched subcarrier. 
Due to SNR reordering, (\ref{eqn_jra_SPA_pjn_eve_bound}) may result only in upper bounds as  $|h_{e,n}|>|h_{k,n}|$, while (\ref{eqn_jra_SPA_pjn_main_eve_bound}) leads to lower bounds (cf. \eqref{eqn_pjn_subcarrier_snatching}).

\emph{
Example (continued):} 
%---------------------------------------
% subcarrier snatching
%---------------------------------------   
Following the strategy of allocating subcarrier to its best gain user, $c_1$ and $c_3$ can be allocated to $u_1$, and $c_2$, $c_4$, and $c_5$ can be allocated to $u_3$.
Hence, $u_2$ is left without any subcarrier.
If we follow conventional max-min approach \cite{Rhee2000}, both $u_1$ and $u_3$ are assigned a subcarrier each in the `for loop', but the algorithm will get stuck in the `while loop' as the secure rate of $u_2$ cannot be improved further. 
If jammer power is utilized, $u_2$ can snatch subcarrier $c_4$ from $u_3$.
We can observe this from the variation of secure rate $R_{2,4}$ of $u_2$ over $c_4$ with $P_{j_4}$.
The secure rate is $R_{2,4}=0.0, 0.2186, 0.5646, 0.5652, 0.5649$ when $P_{j_4} = 0.1, 0.2, 0.9, 0.9587, 1.0$, respectively.
$P_{j_4}^{th_s} = 0.1138$ and $P_{j_4}^o = 0.9587$ (cf. Proposition \ref{prop_max_min_subcarrier_snatching}).
Thus, $R_{2,4}>0$ when $P_{j_4}>P_{j_4}^{th_s}$, and $R_{2,4}$ has a maxima at $P_{j_4}=P_{j_4}^o$. \textbf{$\Box$}
 
\subsection{Proposed modified max-min fairness scheme}\label{proposed_max_min_scheme}
Our proposed max-min fairness scheme uses the concept of subcarrier snatching and also removes the possibility of algorithm getting locked in `while loop'.
Assuming equal source and jammer power allocation in the initialization phase, the algorithm finds two sets of subcarriers for each user $m$: $\{B_m\}$ and $\{S_m\}$. $\{B_m\}$, hereafter referred as best subcarriers of user $m$, contains those subcarriers over which user $m$ has the best channel gain, and $\{S_m\}$ contains those subcarrier which the user $m$ can snatch from its best gain user with the help of jammer. The algorithm initially assumes all the users to be active users, and stores the threshold jammer powers required for snatching the subcarriers (cf. (\ref{eqn_pjn_subcarrier_snatching})). A brief sketch of the proposed max-min algorithm is outlined as follows:

(1) Allocate one best subcarrier to each user in the `for loop' and update users' rate.

(2) Repeat the following steps till there are active users and leftover subcarriers

\hspace{1em}(i) Pick the user with lowest achieved rate

\hspace{1em}(ii) Find a best subcarrier for the user

\hspace{1em}(iii) If found, allocate it to the user and update its rate

\hspace{1em}(iv) Else, check if the user can snatch a subcarrier

\hspace{1em}(v) If yes, allocate the one with minimum $P_{j_{n}}^{th_s}$ and 

\hspace{2.5em} update its rate

\hspace{1em}(vi) Else, remove the user from the active users' list.

The proposed algorithm provides additional opportunity to a least-rate user in the form of snatching a subcarrier in Step (iv). The Proposition \ref{prop_max_min_subcarrier_snatching} suggests that the jammer power should be above a certain threshold for initiating subcarrier snatching. Also there exists an optimal jammer power that achieves the maximum rate over the snatched subcarrier. Thus, the utilization of jammer power at Step (v) can be done in two different ways.  Either a fixed jammer power is reserved over each subcarrier called \emph{proactively fair allocation}, or jammer power is allocated based on user demand, called \emph{on-demand allocation}.
For these two methods we present two different max-min fair schemes in following sections.
The Step (vi), provides a graceful exit of the algorithm from a possible bottleneck. 
It suggests that if a poor user cannot be helped beyond a stage, the user is taken out of the allocation loop.
\subsubsection{Proactively fair jammer power allocation (PFA)}
In this scheme total jammer power is divided equally among all the subcarriers, i.e., 
at each subcarrier there exists a jammer power budget of $P_{j}^{eq} = \frac{P_J}{N}$ that can be utilized by any user for snatching.
By reserving the jammer power on each subcarrier, the scheme attempts to provide equal opportunity of subcarrier snatching to all the users even on the last subcarrier.
The proposed max-min resource allocation scheme based on PFA is presented in Algorithm \ref{suboptimal_algo_max_min_with_jammer_1}.
Note that, this scheme optimizes source and jammer power after every subcarrier allocation.
Per-user source and jammer power optimizations are done as described in Section \ref{subsec_ri_jpa} with the only difference that, now the sets $\{ \mathcal{J}_0 \}$ and $\{ \mathcal{J}_1 \}$ contain the subcarriers of one user.

\subsubsection{On-demand jammer power allocation (ODA)}\label{max_min_oda}
The condition of subcarrier snatching requires $P_{j_n}>P_{j_{n}}^{th_s}$ on subcarrier $n$ (cf. Proposition \ref{prop_max_min_subcarrier_snatching}).  Thus, by limiting the jammer power budget to $P_{j}^{eq}$, PFA reduces the possibilities of subcarrier snatching. The ODA scheme dynamically allocates jammer power based on user demands on a first come first serve basis. While optimizing jammer power through AO, algorithm allocates optimal jammer power $(P_{j_n}^\star)$ required to achieve maximum secure rate over each snatched subcarrier instead of finding $P_{j_n}^\diamond$ (solution of  (\ref{eqn_jra_ao_p2_non_linear_equation})). The proposed max-min scheme in this case is similar to the scheme described in Algorithm \ref{suboptimal_algo_max_min_with_jammer_1} except for the following changes: 
\begin{itemize}
\item The algorithm calculates optimal jammer power for each possibility of subcarrier snatching in the initialization. %phase.
\item While optimizing the jammer power over the subcarriers for a user, the algorithm sums up all the allocated jammer powers of the user and the leftover jammer power as the prospective jammer power budget.
The algorithm then allocates optimal jammer power over all the snatched subcarriers and updates the leftover jammer power.
\end{itemize}

ODA allocates jammer power dynamically, while PFA has conservative nature, it would be interesting to observe how these schemes compete in terms of fairness. We will discuss more about their relative performance in Section \ref{sec_results}.

\algrenewcomment[1]{\(\triangleright\) #1}
\begin{algorithm}
{\small 
\caption{Proposed Max-min Fair Resource Allocation}
\label{suboptimal_algo_max_min_with_jammer_1}
\begin{algorithmic}[1]
\Procedure{}{}
\State \emph{Initialization}
\State $U =\{1,2,\cdots M\}$; $C = \{1,2,\cdots N\}$; 
\State $R_m = 0 \text{ } \forall \text{ } m \in U$; \Comment{user rates}
\State $B_m = \phi \text{ } \forall \text{ } m \in U$; \Comment{best subcarriers}
\State $Ab_m = \phi \text{ } \forall \text{ } m \in U$; \Comment{allocated best subcarriers }
\State $S_m = \phi \text{ } \forall \text{ } m \in U$; \Comment{possible snatch subcarriers }
\State $As_m = \phi \text{ } \forall \text{ } m \in U$; \Comment{allocated snatched subcarriers}
\State $P_{s_n}^{eq} = P_S/N$  $P_{j_n}^{eq} = P_J/N$;  \Comment{equal power allocation}
\For {every user $m$} 
\State find $B_m =  \{ n: | h_{m,n} | > \underset{e \in \{1,M\} \setminus m} \max (| h_{e,n} |) \}$
\State find all $n$ such that $|g_{e,n}|^2 |h_{m,n}|^2 > |g_{m,n}|^2 |h_{e,n}|^2$
\State \hspace{0.25in} where $| h_{e,n} | = \underset{i \in \{1 \cdots M\}} \max (|h_{i,n}|)$ \& $e \neq m$
\State $S_m = S_m + n$; 
\State compute $P_{j_{m,n}}^l$ and $P_{j_{m,n}}^u \text{ } \forall{n \in S_m}$
\State find $n$ = $\arg \underset{n \in B_m } \max (| h_{m,n} |/| h_{e,n} |)$ 
\State $Ab_m = Ab_m + n $; $C = C - n$; $B_m = B_m - n$; 
\State update $R_m$
%\State update $R_m = R_m + R_{m,n}|_{(P_{s_n} = P_{s_n}^{eq}, \pi_{j_n}=0)}$
\EndFor
\While{($(U \neq \phi)$ \&\& $(C \neq \phi)$)}
\State find $v$ = $\arg \underset{m \in \{1 \cdots M\}} \min R_m$ \Comment{minimum rate user}
\If{$B_v \neq \phi$} \Comment{best subcarrier}
\State find $n$ = $\arg \underset{n \in B_v}\max (|h_{v,n}|/|h_{e,n}|)$
\State $Ab_v = Ab_v + n $; $C = C - n$; $B_v = B_v - n$;
\State optimize $P_{s_n}$ and update $R_v$
%\State update $R_l = R_l + R_{l,n}|_{(P_{s_n} = P_{s_n}^{*}, \pi_{j_n}=0)}$
\Else
\If{$S_v \neq \phi$} \Comment{snatched subcarrier}
\State find $n = \arg \underset{n \in S_v} \min P_{j_{n}}^{th_s}$
\If{$ P_{j_n}^{eq}>=P_{j_{n}}^{th_s}$}
\State $As_v = As_v + n $; $C = C - n$; $S_v = S_v - n$;
\State optimize $P_{s_n}$ and $P_{j_{n}}$, and update $R_v$
%\State update $R_l = R_l + R_{l,n}|_{(P_{s_n} = P_{s_n}^{*}, P_{j_n} = P_{j_{l,n}}^{*})}$
\Else
\State $U = U - v$ \Comment{remove the user}
\EndIf
\Else
\State $U = U - v$ \Comment{remove the user}
\EndIf
\EndIf
\EndWhile
\EndProcedure
\end{algorithmic}
}
\end{algorithm}

\subsection{Modified max-min fairness with reduced complexity}
\label{easy_sol_max_fair}
We now present the suboptimal versions for PFA and ODA. % techniques.
\subsubsection{Suboptimal PFA (PFASO)}
In order to obtain a less complex solution, 
instead of optimizing source and jammer power jointly after each subcarrier allocation, 
we follow the strategy of sequential power allocation \cite{ravikant_ICC_2015}, as described in Section \ref{easy_sol_rate_max}.
Source power over the best subcarriers of a user is optimized according to (\ref{eqn_woj_opt_src_power_allocation}).
Keeping the equal source power fixed over the snatched subcarriers, the jammer power  is optimized by solving a single user jammer power allocation problem on
similar lines as (\ref{subopt_jammer_power_allocation_problem}).
Assuming $\{As_m\}$ to be the set of allocated snatched subcarriers of user $m$,
the jammer power budget is given as $P_J|As_m|/N$.
Since a subcarrier can be snatched only when $P_J/N$ $>$ $P_{j_{m,n}}^l$, 
thus $\frac{P_J | As_m |}{N}$ $>$ $\sum_{n \in As_m} P_{j_{m,n}}^l$ and hence   (\ref{subopt_jammer_power_allocation_problem}) has a feasible solution.
Till $\sum_{n \in As_m} P_{j_{m,n}}^u > \frac{P_J | As_m |}{N}$, the optimal jammer power $P_{j_{m,n}}^*$ (cf. (\ref{sub_opt_jammer_power})) is utilized.
However, when there is enough jammer power, the assignment is $P_{j_{m,n}} = P_{j_{m,n}}^u -\delta$, because in   snatching scenario the secure rate is positive when $P_{j_{m,n}} > P_{j_{m,n}}^l$.
Combinedly, jammer power is allocated as:
\begin{equation}\label{jammer_power_allocation_maxmin}
P_{j_{m,n}}=
\begin{cases}
P_{j_{m,n}}^*, & \text{if } \sum_{n \in As_m} P_{j_{m,n}}^u > \frac{P_J | As_m |}{N}\\
P_{j_{m,n}}^u -\delta, & \text{otherwise}
\end{cases} 
\end{equation}
where $\delta$ is a small positive number. 
\subsubsection{Suboptimal ODA (ODASO)}
It is a very light weight scheme and does not attempt any power optimization.
With equal source power allocation, every time the algorithm attempts to snatch a subcarrier, it simply allocates the optimal jammer power $(P_{j_n}^\star)$ over the snatched subcarrier and updates %user's rate as well as 
leftover jammer power.
Once the jammer power is exhausted, the algorithm either allocates the best subcarrier to the least-rate user or takes it out from active user set.

\subsection{Complexity analysis}
While creating $\{S_m\}$, PFA considers for each user the possibility of subcarrier snatching from the best gain user. This involves calculation of jammer power thresholds for snatching, lower and upper jammer power bounds, and optimal jammer powers required for achieving maximum rate.
The total computation is $O(NM^2)$. 
Denote that a user got $N_1$ best subcarriers and $N_2$ snatched subcarriers before being taken out of the resource allocation loop. The joint power optimization procedure has the complexity of $O(I_{pd}  N_1 \log(N_1) + I_{ao} I_{pd}  N_2 \log(N_2))$, where $I_{ao}$ and $I_{pd}$ are respectively the number of AO and PD iterations.
The total computational on this user in all the previous iterations is of 
$O ( (N_2-1)I_{pd}  N_1 \log(N_1) + I_{ao} I_{pd} \sum_{i=2}^{N_2-1}i \log(i) + \sum_{i=2}^{N_1}i \log(i) )$ which can be upper bounded as $O( N_2 I_{pd} N_1 \log(N_1) + I_{ao} I_{pd} N_2^2 \log(N_2) + N_1^2 \log(N_1) )$ using   $\sum_{i=2}^{n}i \log(i) \leq (n-1)(n\log(n))<n^2 \log(n)$.
The worst case complexity of the algorithm when $N_2 \simeq N$ is $I_{ao} I_{pd} N^2 \log(N)$.

The computation complexity of ODA is on the same order as that of PFA. 
PFASO has worst case computation complexity $O(N^2 \log N)$, because it does not use PD and finishes source and jammer power optimization sequentially. 
ODASO does not perform any power optimization at all, and has the worst case computation complexity of $O(N^2M)$. 
The worst case computational complexities of PFA and ODA and their respective suboptimal versions are summarized in Table \ref{max_min_algorithm_comparison}.
\begin{table}[!htb]
\caption{Comparison of max-min algorithms}\label{max_min_algorithm_comparison}
\centering
{\scriptsize
\begin{tabular}{|L{0.9cm} |L{2cm}| L{1.5cm}| L{2.5cm}| }
\hline
&    {Power optimization} &{$P_j$ allocation} & {Complexity} \\
\hline
\hline
{PFA}  &  Joint &$P_{j_n}^\star$ or $P_{j_n}^\diamond$ & $O(I_{ao} I_{pd} N^2 \log(N))$ \\
\hline
{ODA}  &  Joint & $P_{j_n}^\star$ & $O(I_{ao} I_{pd} N^2 \log(N))$ \\
\hline
{PFASO}  &  Sequential & $P_{j_n}^*$ or $P_{j_{n}}^u$ &  $O(N^2 \log(N))$ \\
\hline
{ODASO}  & No & $P_{j_n}^\star$ & $O(N^2M)$ \\
\hline
\end{tabular}
}
\end{table} 

\section{Asymptotic analysis}\label{sec_asymp_anlysis}
In this section, we are interested to determine the best user, the strongest eavesdropper, and the maximum secure rate achievable when $\{P_{s_n}, P_{j_n}\} \to \infty$ on a subcarrier $n$. We note that, for a fixed main user and eavesdropper, a subcarrier can be used in without-jammer mode or with-jammer mode in either rate improvement or subcarrier snatching scenario. In without-jammer mode the secure rate is an increasing concave function of source power. In both the scenarios with jammer, secure rate is a concave increasing function of $P_{s_n}$ for a fixed $P_{j_n}$, and for a fixed $P_{s_n}$ there exists an optimal jammer power $P_{j_n}^\star$ which too is an increasing function of $P_{s_n}$. Hence, at $P_{j_n}^\star$ secure rate becomes an increasing function of $P_{s_n}$.  

When jammer is not active on a subcarrier, the user with the maximum channel gain $|h_{m,n}|$ is the main user. But when jammer is active this might not be true. To find the main user of the subcarrier when $\{P_{s_n}, P_{j_n}\} \to \infty$, we use the concept of jammer power bounds discussed in Section \ref{subsubsec_ri_bounds}. Specifically, a user $m$ will remain the main user if the condition $\gamma_{m,n}'>\gamma_{e,n}' \forall e \in \{ 1,\cdots, M \} \setminus m$ gives no upper bound on $P_{j_n}$. Denote the effective lower bound of these conditions  as $P_{j_n}^l$. We claim that only one user will satisfy this condition. Otherwise, let two users $m_1$ and $m_2$ be the contenders. Let the lower bounds on $P_{j_n}$ be $P_{j_n}^{l_{m_1}}$ and $P_{j_n}^{l_{m_2}}$. Without loss of generality let $P_{j_n}^{l_{m_1}} < P_{j_n}^{l_{m_2}}$. This contradicts the fact that user $m_1$ can be the main user at $P_{j_n} \to \infty$, because when $P_{j_n} > P_{j_n}^{l_{m_2}}$, $m_2$ becomes the main user.

Let the main user over a subcarrier be $m$. To identify the eavesdropper and the secure rate achievable at $\{P_{s_n}, P_{j_n}\} \to \infty$  we follow the following approach. We consider each of the leftover users as the possible eavesdroppers. Thus, there are $(M-1)$ $\{m, e\}$ user pairs, where $e \in \{ 1\cdots M \} \setminus m$. Over each of the pair, there exist four cases:

(1) If $|h_{m,n}|>|h_{e,n}|$ and $|g_{m,n}|>|g_{e,n}|$, rate improvement condition is not satisfied.
An obvious decision is to set $P_{j_n} = 0$. In without-jammer case, secure rate is a concave increasing function of $P_{s_n}$ and tends to $\log_2\left( \frac{|h_{m,n}|^2}{|h_{e,n}|^2} \right)$ as $P_{s_n} \to \infty$.

(2) If $|h_{m,n}|>|h_{e,n}|$ and $|g_{m,n}|<|g_{e,n}|$, rate improvement condition is satisfied.
$P_{j_{n}}^{th_{i}}$ is an increasing function of $P_{s_n}$ (cf. Proposition \ref{prop_sri_jammer_rate_imp_constraints}).
Hence, as $P_{s_n} \to \infty$, $P_{j_{n}}^{th_{i}} \to \infty$.
Putting $P_{j_n} = P_{j_n}^\star$, the optimal secure rate is $\log_2 \left( \frac{1 + \gamma_{m,n}'}{1 + \gamma_{e,n}'} \right)$, where 
\begin{align}\label{sec_rate_k_e_pair}
\frac{1 + \gamma_{m,n}'}{1 + \gamma_{e,n}'} &= \frac{1+\frac{P_{s_n} |h_{m,n}|^2}{\sigma^2 + P_{j_n}^\star |g_{m,n}|^2}} {1+\frac{P_{s_n} |h_{e,n}|^2}{\sigma^2 + P_{j_n}^\star |g_{e,n}|^2}} 
\stackrel{a}{\approx} 
\frac{\frac{P_{s_n} |h_{m,n}|^2}{\sigma^2 + P_{j_n}^\star |g_{m,n}|^2}} {\frac{P_{s_n} |h_{e,n}|^2}{\sigma^2 + P_{j_n}^\star |g_{e,n}|^2}} \nonumber \\
& = 
\frac{|h_{m,n}|^2}{|h_{e,n}|^2} \frac{\frac{\sigma^2}{P_{j_n}^\star} + |g_{e,n}|^2}{\frac{\sigma^2}{P_{j_n}^\star} + |g_{m,n}|^2} 
\stackrel{b}{\to}
\frac{|g_{e,n}|^2 |h_{m,n}|^2 }{|g_{m,n}|^2 |h_{e,n}|^2 }.
\end{align}
%\frac{|h_{m,n}|^2}{|h_{e,n}|^2} \frac{\sigma^2 + P_{j_n}^\star |g_{e,n}|^2}{\sigma^2 + P_{j_n}^\star |g_{m,n}|^2} 
The approximation (a), and (b) follows from $P_{j_n}^\star$ being an increasing function of $P_{s_n}$, which increases as $\sqrt{P_{s_n}}$. Thus, with $P_{s_n} \to \infty$ and $P_{j_n} \to \infty$ we have $P_{j_n}^\star \to \infty$. 
 
(3) If $|h_{m,n}|<|h_{e,n}|$ and $|g_{e,n}||h_{m,n}|<|g_{m,n}||h_{e,n}|$, subcarrier snatching condition is not fulfilled; secure rate $=0$.

(4) If $|h_{m,n}|<|h_{e,n}|$ and $|g_{e,n}||h_{m,n}|>|g_{m,n}||h_{e,n}|$, subcarrier snatching condition is satisfied. The secure rate in this case is the same as in case (2). 

Once the secure rate for each user pair $\{m, e\}$ is obtained, 
the strongest eavesdropper for user $m$ at $\{P_{s_n}, P_{j_n}\} \to \infty$ is the one that causes the minimum secure rate. 
Thus, with jammer, for each subcarrier we have the main user, its eavesdropper, and the corresponding secure rate achieved.

When the jammer is inactive, the best channel gain user is the main user and the next best gain user is the corresponding eavesdropper. The secure rate in this case is: $\log_2\left( \frac{|h_{m,n}|^2}{|h_{e,n}|^2} \right)$.

\subsection{Asymptotic bound in sum rate maximization scenario}\label{upper_bound_rate_max}
Motivated by the asymptotic behavior of secure rate and identification of the main user and eavesdropper as  $\{P_{s_n}, P_{j_n}\} \to \infty$, we obtain an upper bound to the maximum secure rate of the system.
To do so, we consider a subcarrier without-jammer as well as with-jammer, and  choose the mode that offers the maximum secure rate upper bound over that subcarrier.
The system upper bound is then found as the sum of the upper bounds in chosen modes over all the subcarriers. 

To find asymptotically optimal solution for rate maximization we note that 
the sum rate achievable as $\{P_S, P_J\} \to \infty$ is an upper bound to the maximum sum rate of the system. Hence, the decision of subcarrier allocation and jammer mode at $\{P_S, P_J\} \to \infty$ is optimal.
Keeping the decision at $\{P_S, P_J\} \to \infty$ same over every finite $P_S$ and assuming $P_J \to \infty$, we find the sum rate that becomes optimal as $P_S \to \infty$. Let the sum rate be $R_U(P_S)$.
As $\{P_S, P_J\} \to \infty$, relative gap between the rate achievable in the proposed algorithm and $R_U(P_S)$ at any $P_S$ will demonstrate effectiveness of the algorithm in reaching the optimal point.
We note that at low values of $P_S$, $\{P_S, P_J\} \to \infty$ may not maximize the sum rate; hence it may be below the rate achievable by proposed scheme. 
The motivation behind $R_U(P_S)$ is that, for a fixed subcarrier allocation and jammer mode decision, the achievable maximum rate with finite $P_S$  and $P_J \to \infty$ is an upper bound to that achievable with finite $P_S$ and $P_J$. 

\subsection{Asymptotic bound in the proposed max-min fairness}\label{upper_bound_max_min}
In this case, we cannot use the per-subcarrier secure rate upper bounds of the two jammer modes. The proposed max-min fair scheme works on a per-subcarrier basis and updates rate of the minimum-rate user after every subcarrier allocation. Hence, if we apply the proposed max-min scheme on the upper bounds, it may not lead to system upper bound. Also, since the decision of subcarrier allocation and jammer usage varies with $P_S$, the decision at $P_S \to \infty$ cannot be kept fixed.

Instead, motivated by the ODA scheme in Section \ref{max_min_oda}, we note that as $P_J \to \infty$ the fairness achievable by ODA will be an upper bound to the fairness achieved by any other bounded $P_J$ scheme. 
In ODA scheme with finite $P_J$, the jammer power gets exhausted very soon.
Hence, it allows fewer subcarrier snatching, which limits its fairness performance.
Considering ODA with $P_J \to \infty$, we allocate $P_{j_n} = P_{j_n}^\star$ and optimize source power over all the subcarriers of the minimum rate user.  
Thus ODA can now help the poor users to snatch subcarriers and improve their rates without any jammer power budget constraint.
The best fairness achievable by our proposed scheme is given by this achievable upper bound.

\section{Results and discussion}
\label{sec_results}
Performances of the proposed algorithms are evaluated through MATLAB simulations. 
We consider the downlink of an OFDMA system with $N=64$ subcarriers, which are assumed to experience  frequency-flat slow fading. 
Large scale fading is modeled using a path loss exponent equal to $3$, and small scale fading is modeled using i.i.d. Rayleigh faded random variables.
The source is located at the origin and all untrusted users with default $M = 8$ are randomly located inside a unity square in the first quadrant which symbolically maps to a sector in cellular communications system. 
Friendly jammer's location is varied to determine its optimum position. For illustration, all users are assumed to have equal weights i.e., $w_i = 1$, and the AWGN noise variance $\sigma^2=1$. Secure rate is measured in bits per OFDM symbol per subcarrier. 

\textit{Equal power allocation:} We compare the performance of our proposed schemes with equal power allocation (EPA) that utilizes equal source power on all the subcarriers. 
Respecting the jammer power bounds developed in Section \ref{subsubsec_ri_bounds}, equal jammer power is assigned over the selected set of subcarriers following the observations in Proposition \ref{prop_sri_jammer_rate_imp_constraints} and \ref{prop_max_min_subcarrier_snatching} for rate improvement and max-min fairness scenarios, respectively. 

\textit{Optimal Source Power Without Jammer (OSPWJ):} In sum rate maximization case, this scheme obtains the optimal source power allocation assuming the jammer to be absent. In max-min fair scenario, after subcarrier allocation to a user, source power over the subcarriers allocated to the user is optimized.

The conventional fairness measure as in \cite{jain1984}, that considers user's rate as the basis for fairness evaluation, is not meaningful in our context because over a large number of iterations with random user locations, the users' rates tends to be similar, causing the fairness index to be close to unity. 
Instead, in order to compare the true capability of a fairness algorithm in bridging the imbalance in users' rate, we measure fairness as the relative gap between maximum and minimum secure rate allocated to the users by a competing algorithm. In this case, in  each iteration identity of the users are ignored and the secure rates allocated by the algorithm are sorted in ascending order and then fairness is obtained from this sorted rate. 

\subsection{Rate maximization schemes}
We first discuss the effect of jammer location on the system performance.
Four possible jammer locations $(0.5,0.5)$, $(0,1)$, $(1,0)$, $(1,1)$ are considered.
Fig. \ref{fig:rate_imp_jammer_location} presents the rate maximization algorithm performance versus source power. 
With varying locations of jammer, the algorithm performs differently, because of the varying degree of impact the jammer has on the users. 
Performance at location $(0,1)$ and $(1,0)$ are similar, because these are symmetric locations for the square geometry considered.
Location $(1,1)$ performs the poorest as the jammer is too far to have significant effect on users' rates. The central location $(0.5, 0.5)$ performs the best, as the jammer is able to affect all the users and contribute significantly in improving their rates. Motivated by this, we consider the jammer to be located at the center in our subsequent simulations.

\begin{figure}[!htb]
\centering
\epsfig{file=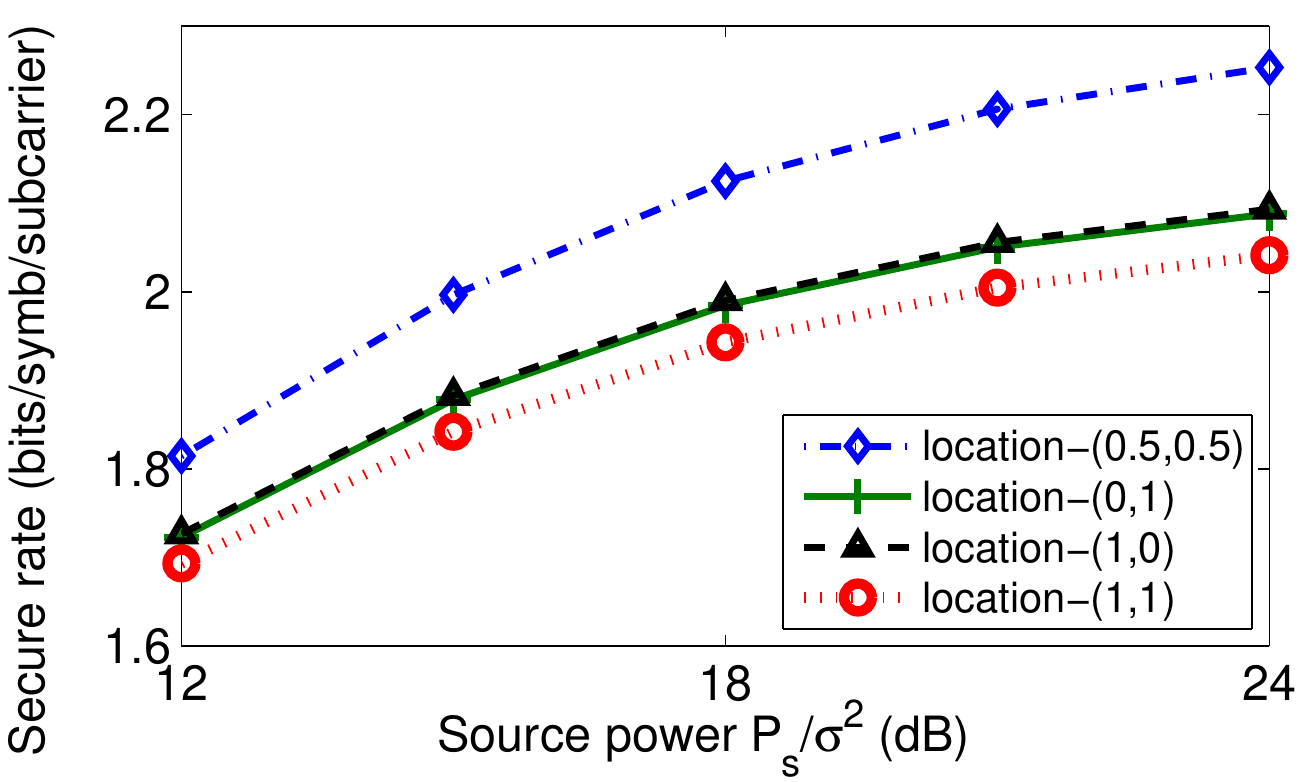, height=1.6in}
\caption{Secure rate versus source power at various jammer locations with jammer power $P_J/\sigma^2=0$ dB.}
\label{fig:rate_imp_jammer_location}
\end{figure}

Fig. \ref{fig:rate_imp_ps_variation} shows the secure rate and fairness performance of proposed JPA scheme with respect to source power at two different values of jammer power, $P_{J1}/\sigma^2=0$ dB and $P_{J2}/\sigma^2=6$ dB. The performance of JPA is compared with the suboptimal version JPASO and EPA. The rate achieved with OSPWJ is also plotted to emphasize secure rate improvement with friendly jammer. The performance of asymptotically optimal solution ($P_J \to \infty$) is plotted as `Asymp opt' to indicate closeness of proposed algorithms to the  optimal solution. Sum secure rate and the corresponding fairness upper bounds as $\{P_S, P_J\} \to \infty$ are also indicated in text boxes in the respective figures.

Fig. \ref{fig:rate_imp_ps_variation}(a) indicates that JPASO performs better compared to EPA because JPASO performs power optimization while EPA does not. JPASO has a marginal performance penalty with respect to JPA. Also, the gap between JPA and OSPWJ initially keeps increasing and then saturates because of diminishing returns at higher source power. 
It may be further noted that, `Asymp opt' has poor performance at lower source power budget, because the decision of subcarrier allocation and jammer mode at $\{P_S, P_J\} \to \infty$ may not be optimal at lower value of $P_S$. At finite $P_S$, the possibility of utilizing jammer power under jammer power bounds over larger number of subcarriers is more compared to that at $P_S \to \infty$. But, as $P_S$ increases the decision of subcarrier allocation and jammer mode becomes optimal and the `Asymp opt' tries to catch the upper bound.
Since secure rate performance of the proposed schemes improve with $P_J$, hence at higher $P_S$ and $P_J$ the performance of JPA tends to that of `Asymp opt'.

Fig. \ref{fig:rate_imp_ps_variation}(b) shows the associated fairness performance.
Because of equal distribution of resources in EPA, its fairness performance is better at lower source power compared to both JPA and JPASO. 
With increasing source power, the jammer is able to affect more number of users because the percentage of subcarriers over which jammer can help keep increasing (cf. Proposition \ref{prop_sri_jammer_rate_imp_constraints}).
This results in improved secure rate as well as fairness performance of JPA compared to EPA.
\begin{figure}[!htb]
\centering
\epsfig{file=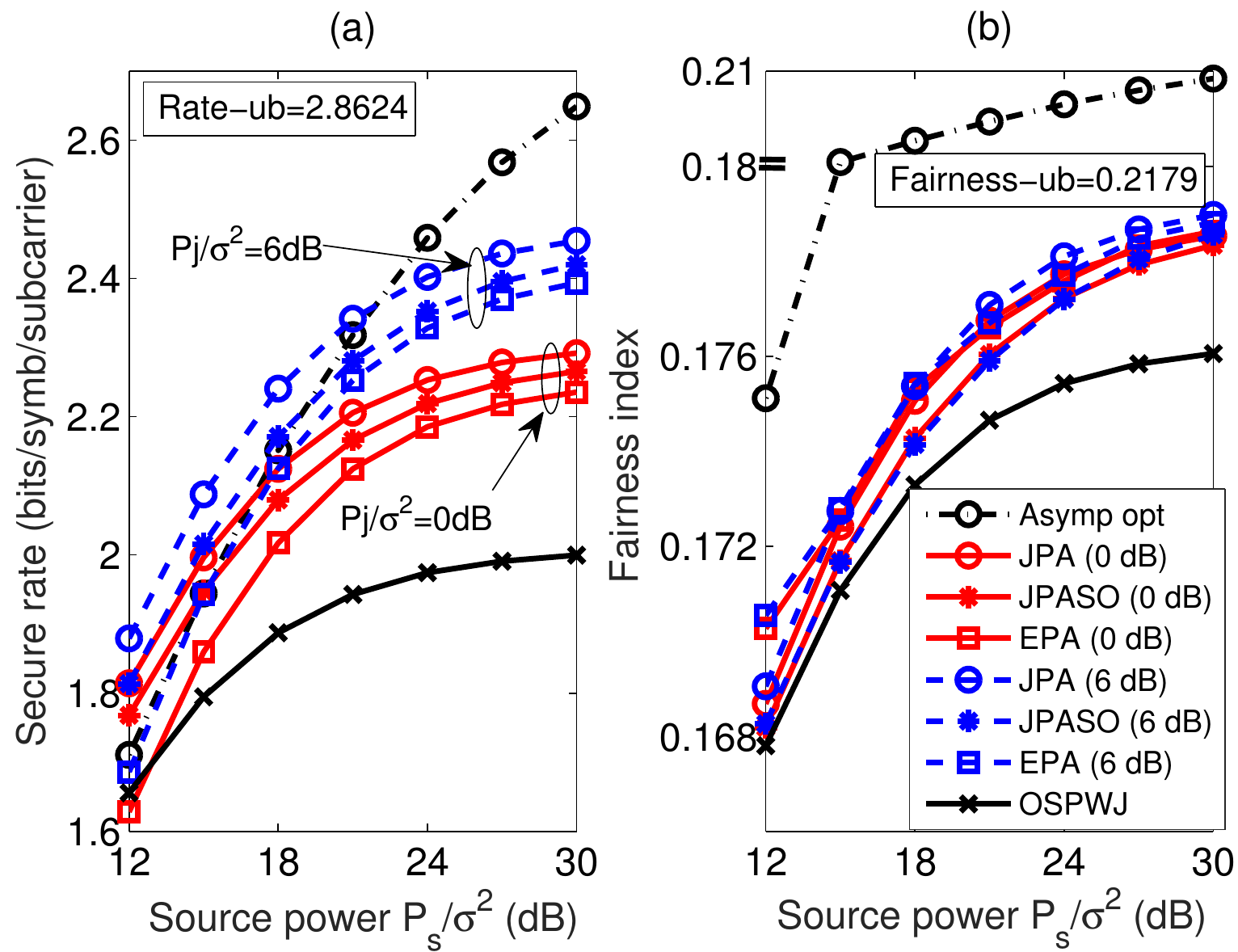, width=3.2in} 
\caption{Secure rate and fairness performance versus source power at $P_{J1}/\sigma^2 = 0$ dB and $P_{J2}/\sigma^2=6$ dB. `Rate-ub': rate upper bound; `Fairness-ub': fairness upper bound.}
\label{fig:rate_imp_ps_variation}
\end{figure}

Performance of the secure rate improvement algorithms with respect to jammer power is presented in Fig. \ref{fig:rate_imp_pj_variation}.
As observed in Fig. \ref{fig:rate_imp_pj_variation}(a) the secure rate of JPA initially increases with jammer power and then saturates, as the algorithm start allocating optimal jammer power $(P_{j_n}^\star)$ achieving maximum secure rate over the selected set of subcarriers (cf. Proposition \ref{prop_sri_jammer_rate_imp_constraints}).
Due to sequential power allocation, JPASO performance is inferior compared to JPA.
The secure rate saturates at a value lower than the peak value, because JPASO is oblivious to existence of optimal jammer power and utilizes more jammer power than required.
Since EPA uses equal jammer power, the secure rate initially increases, achieves a maximum and then reduces with increased jammer power.
Compared to EPA which utilizes equal jammer power blindly, the performance of JPASO is relatively better at higher jammer power, which can be attributed to improved jammer power allocation policy (\ref{jammer_power_allocation}). 
Because of the existence of finite upper bounds on jammer power (cf.  Section \ref{subsubsec_ri_bounds}), the rate achieved by EPA also saturates at higher jammer power but at a relatively lower value compared to JPASO.
The corresponding fairness performance plots for the various schemes are presented in Fig.  \ref{fig:rate_imp_pj_variation}(b).
At lower source power, EPA performance is better because of the utilization of equitable distribution of resources, while at higher source power JPA performs comparable to EPA at lower jammer power but outperforms at higher jammer power.
With increasing jammer power JPA fairness saturates while EPA fairness achieves a peak, reduces a bit and then saturates, showing similar trend as the corresponding secure rate plots.  
JPASO first completes source power optimization and then finishes jammer power optimization over the selected set of subcarriers, thereby increasing the secure rate imbalance which results in comparatively poor fairness performance.

\begin{figure}[!htb]
\centering
\epsfig{file=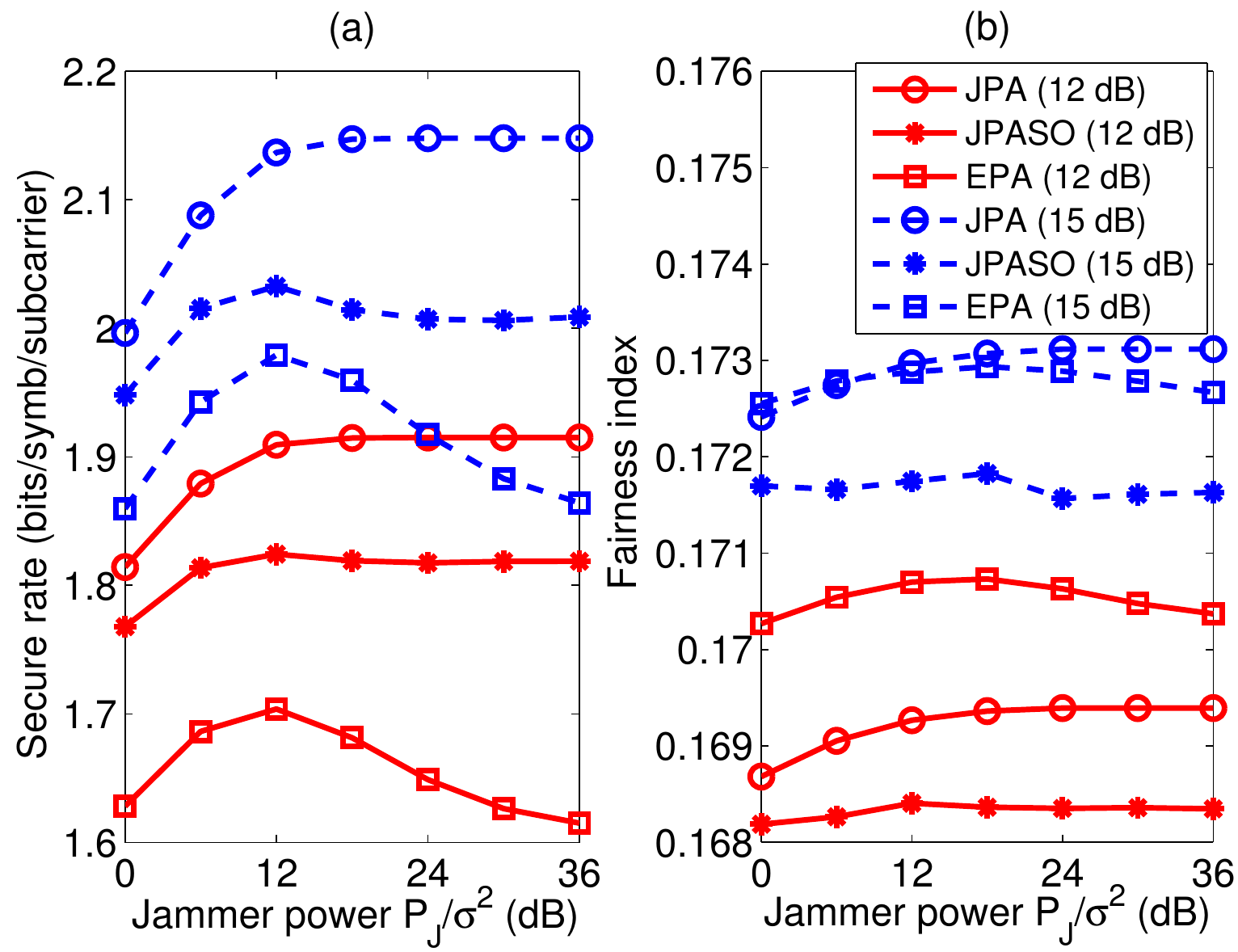,width=3.2in}
\caption{Secure rate and fairness performance versus jammer power at $P_{S1}/\sigma^2 = 12$ dB and $P_{S2}/\sigma^2=15$ dB.}
\label{fig:rate_imp_pj_variation}
\end{figure}

Fig. \ref{fig:rate_imp_user_var} presents the performance of the algorithms in secure rate improvement scenario with number of users $M$.
It may appear that, with increasing number of users the secure rate should reduce as the number of eavesdropper increases, but eventually 
the secure rate of all the algorithms improves with increasing number of users because of multi-user diversity. 

\begin{figure}[!htb]
\centering
\epsfig{file=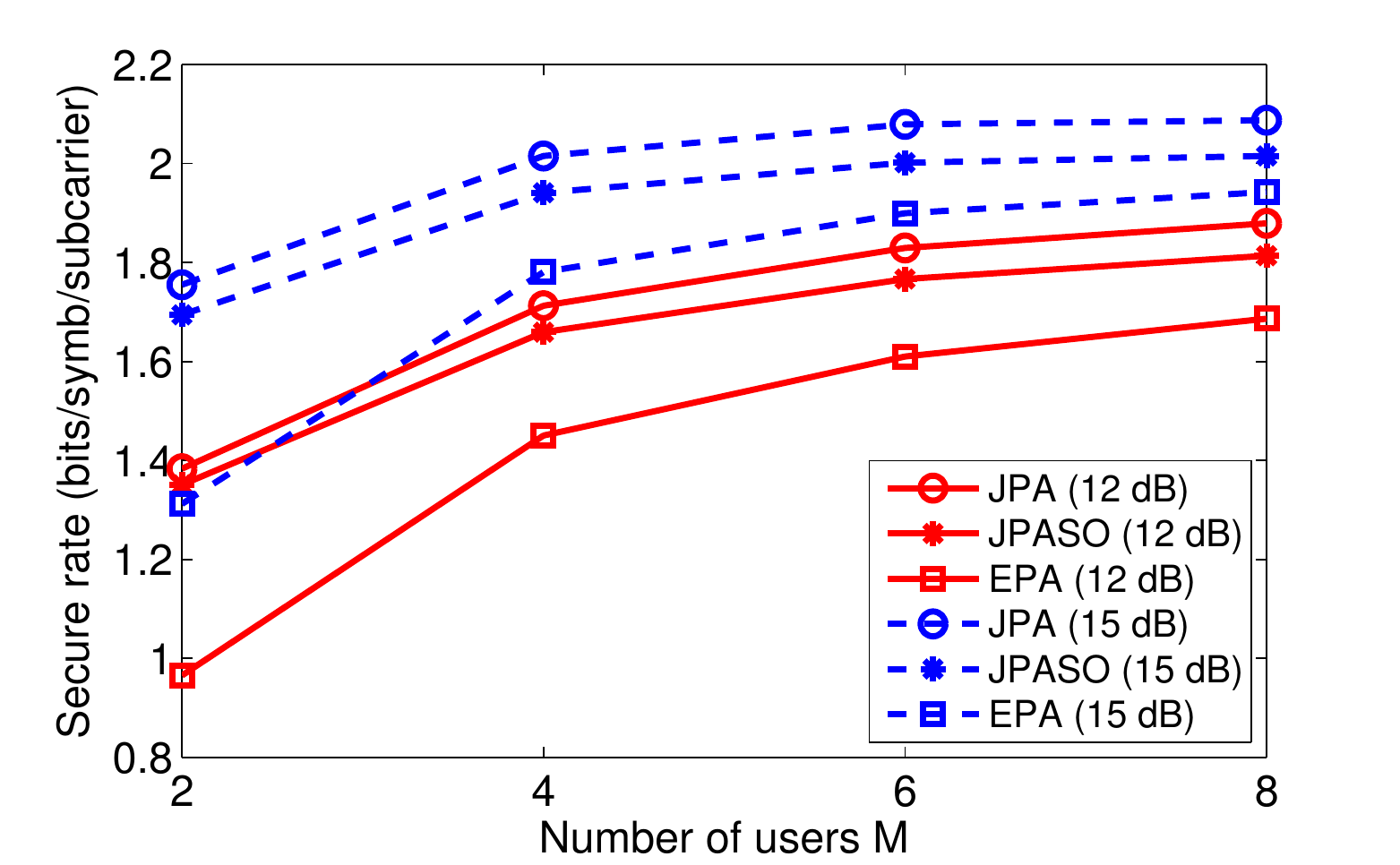,width=2.3in}
\caption{Secure rate versus number of users $M$ at $P_J/\sigma^2=6$ dB and $P_{S1}/\sigma^2 = 12$ dB and $P_{S2}/\sigma^2=15$ dB.}
\label{fig:rate_imp_user_var}
\end{figure}

\subsection{Max-min fair schemes}
Next we present the performance of the proposed max-min fairness algorithms in Figs.   \ref{fig:max_min_ps_variation} and \ref{fig:max_min_pj_variation}.
PFA and ODA and their corresponding suboptimal versions PFASO and ODASO are considered with EPA. 
The performance of asymptotically optimal scheme which act as an upper bound to our proposed max-min scheme has been plotted as `Asymp opt'.
Fig. \ref{fig:max_min_ps_variation} presents the fairness and secure rate performance of the algorithms versus source power.
ODA utilizing jammer power dynamically, allocates optimal jammer power on first come first serve basis.
This results in more subcarrier snatching possibilities and hence a higher fairness at low source power compared to PFA. Being a conservative algorithm, PFA is not able to help many users because of limited per subcarrier jammer power budget ($P_J/N$).
Since the optimal jammer power $(P_{j_n}^\star)$ for maximum secure rate over a snatched subcarrier is an increasing function of source power (cf. (\ref{rate_derivative_pjn_quad_coef1})-(\ref{rate_derivative_pjn_quad_coef3})), 
jammer resource gets exhausted very soon in ODA.
Because of this, ODA's capability to improve fairness reduces drastically with increasing source power, and correspondingly its fairness performance starts degrading. % and goes even below that of ODASO.
ODASO also faces the same issue of depleting all the jammer resource for a few initial users and later finding itself unable to help users, which results in early saturation of the fairness with source power.
In contrast PFA and EPA even if helping a limited number of users because of their conservative nature, keep performing well with increasing source power and outperform ODA at higher source power. 
While ODASO is able to help only a few initial users, EPA provides equal opportunity to all the users. As a result, the overall performance of EPA is better than ODASO. 
Though ODA at $P_J=18$ dB has similar performance as with $P_J=12$ dB, the plot gets closer to that of `Asymp opt' as $P_J$ is increased. `Asymp opt' does not face the problem of depleting jammer resources and keep on helping users as far as possible, which results in better system fairness compared to all other bounded $P_J$ schemes. The cost being paid for higher fairness is the reduced secure rate, as observed in Fig. \ref{fig:max_min_ps_variation}(b). 
The secure rate of OSPWJ is better than all other schemes. However, its fairness performance is the poorest because it either allocates the best subcarrier to a user or does not allocate at all. 
Because of the dynamic allocation, ODA allows more subcarrier snatching which results in poorer secure rate performance compared to PFA as observed in Fig. \ref{fig:max_min_ps_variation}(b).
Note that, with the help of jammer the overall system fairness can be improved significantly (compared to OSPWJ), however at the cost of a little poorer secure rate performance.

\begin{figure}[!htb]
\centering
\epsfig{file=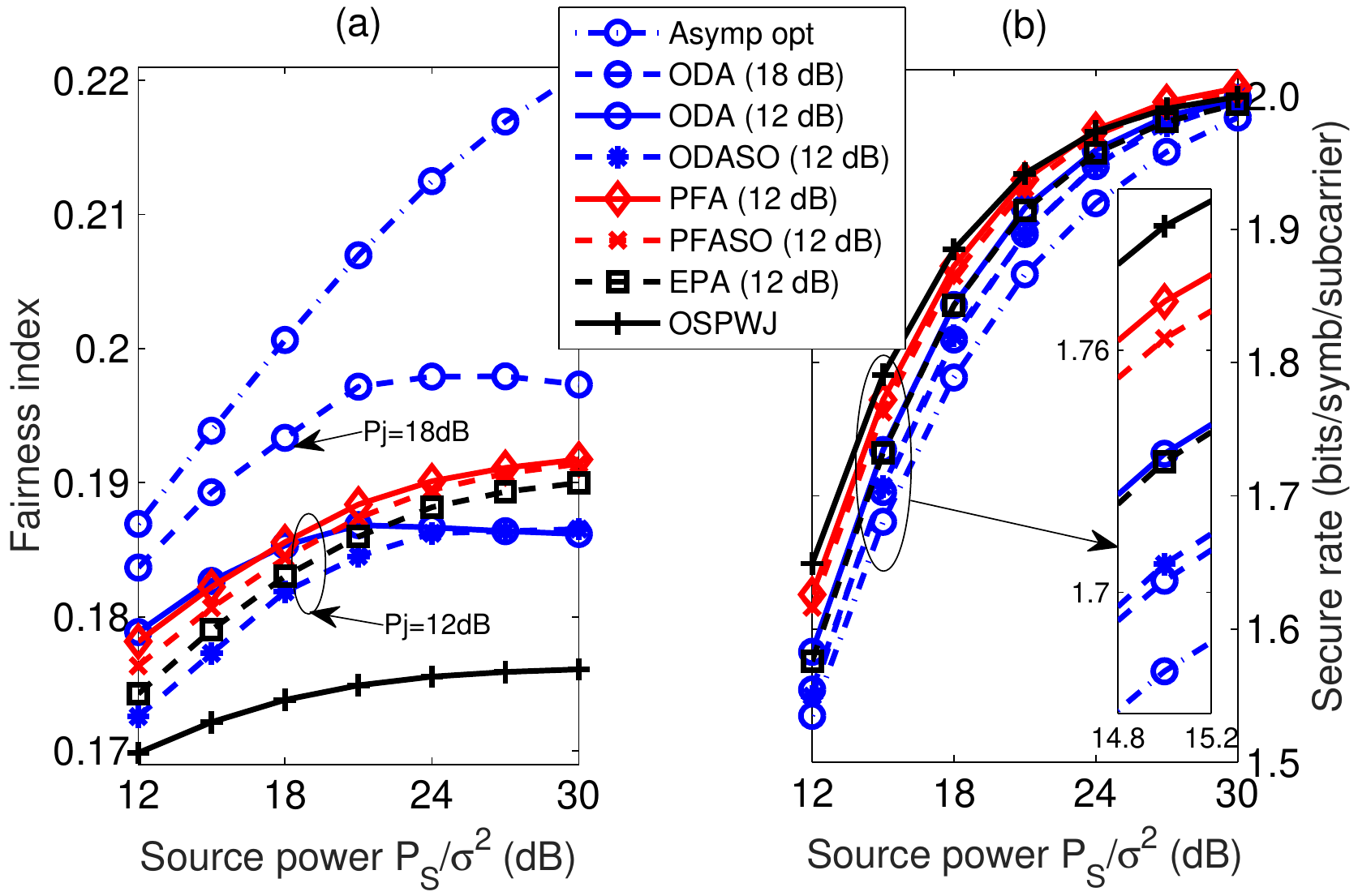,width=3.2in}
\caption{Fairness and secure rate versus source power at $P_{J1}/\sigma^2 = 12$ dB and $P_{J2}/\sigma^2 = 18$ dB.}
\label{fig:max_min_ps_variation}
\end{figure}

The performance of the max-min algorithms versus jammer power is presented in Fig.  \ref{fig:max_min_pj_variation}.
For fixed source power, performance of ODA as well as PFA keep improving with increasing jammer power, because of the increasing possibility of optimal jammer power allocation $(P_{j_n}^\star)$ over snatched subcarriers, and finally saturates.
At lower jammer power, performance of ODASO is poorer compared to EPA, because ODASO is able to help only a few initial users while EPA provides equal opportunity to all the users.
Because of the increasing possibility of allocating $P_{j_n}^\star$, ODASO outperforms EPA as well as PFASO at higher jammer power.
Both EPA and PFASO are unaware of the existence of optimal jammer power over snatched subcarrier. As a result, they utilize more than the required jammer power, leading to their degraded performance at higher jammer power. 
As observed in corresponding secure rate performance in \ref{fig:max_min_pj_variation}(b), secure rate of all the algorithms keep reducing  with jammer power due to increasing possibility of subcarrier snatching.
PFA and ODA saturates at higher jammer power because of the utilization of optimal jammer power $(P_{j_n}^\star)$ by the algorithms. 
ODASO saturates at a lower value compared to ODA because ODA uses power optimization while ODASO does not. 
Similarly PFASO shows poorer secure rate performance because of sequential power allocation.
The secure rate of EPA also saturates at higher jammer power, due to the existence of upper jammer power  bounds, but at relatively lower values compared to that of PFASO which uses better jammer power allocation (cf. (\ref{jammer_power_allocation_maxmin})).

\begin{figure}[!htb]
\centering
\epsfig{file=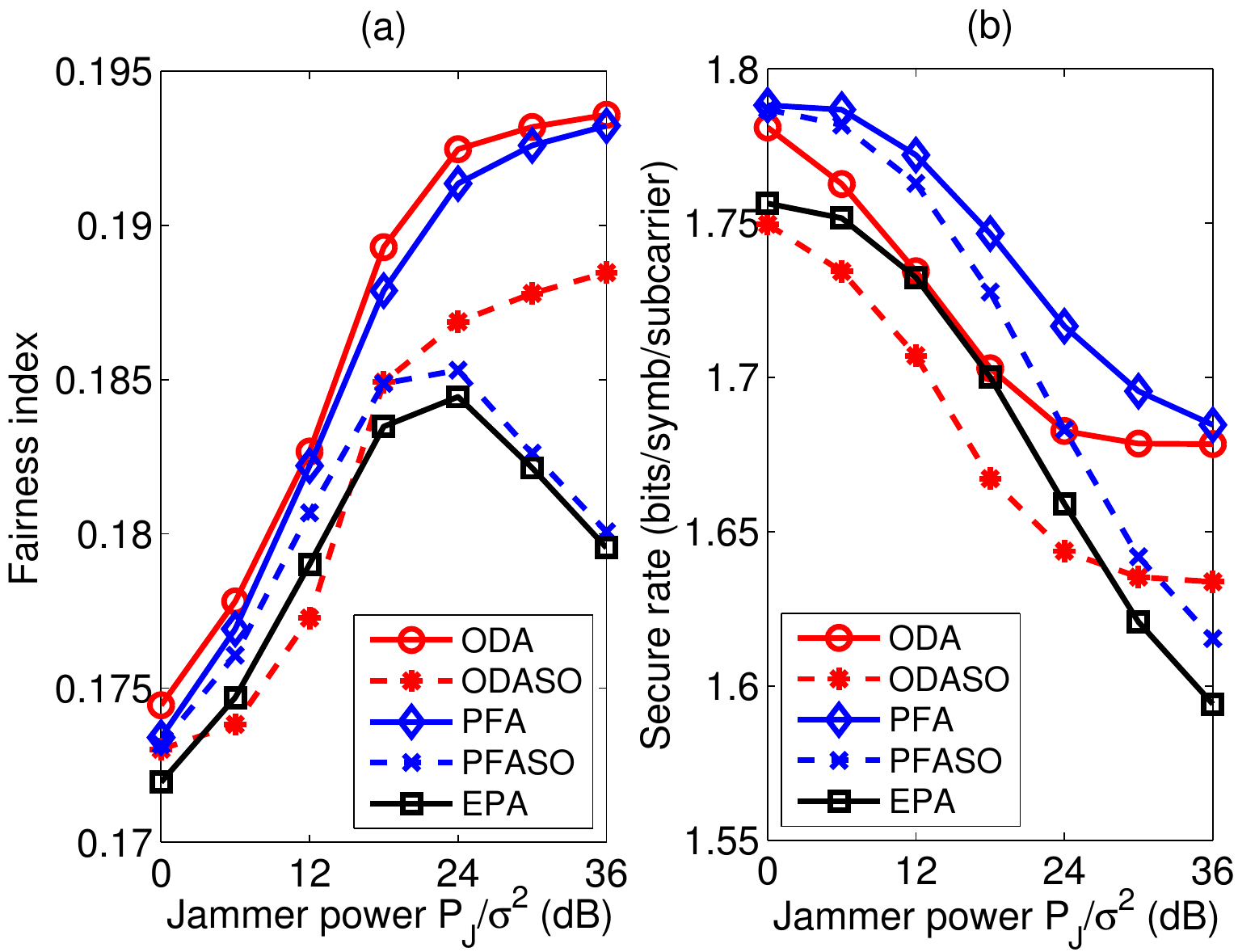,width=3.2in}
\caption{Fairness and secure rate versus jammer power at $P_{S}/\sigma^2 = 15$ dB.}
\label{fig:max_min_pj_variation}
\end{figure}

\section{Concluding Remarks} 	
\label{sec_conclusion}
We have considered jammer power utilization in the downlink of broadcast secure OFDMA for either secure rate or fairness improvement.
We have shown that secure rate can be improved under certain channel conditions with constraints on source and jammer power.
Also in presence of jammer, maximum secure rate is achieved at an optimal jammer power.
While solving the otherwise NP-hard resource allocation problem in steps, 
we have used PD and AO procedures for joint source and jammer power optimization.
It has been observed that sum secure rate can be significantly improved by using jammer power.
In max-min fairness scenario we have shown that, with the help of jammer a subcarrier can be snatched form its best gain user and allocated to a poor user who would otherwise have a low secure rate due to resource scarcity. 
To this end, we have presented two means for jammer power usage: PFA and ODA.
Overall performance of PFA is better. ODA performs  
marginally better at lower source power due to dynamic allocation of resources.
But at higher source power, with increasing user demands ODA performs poorly because of depleting  jammer resources for a few initial users. 
Asymptotically optimal solutions have also been derived to benchmark optimality of the proposed schemes. 

Possible future extensions %of the current work 
include: consideration of multiple antenna at the nodes, imperfect CSI, finding the best location of jammer, and multiplicity of jammer. However, in each of the problems above the combinatorial aspect will open up new challenges. Imperfect CSI can lead to false decisions on subcarrier allocation, which may lead to zero secure rate. Though multi-antenna nodes or multiple jammers will add new degrees of freedom, the resource allocation algorithms need to be modified significantly. Optimizing jammer location is an interesting problem due to users' mobility. 

\appendices
\setcounter{equation}{0}
\setcounter{figure}{0}
\renewcommand{\theequation}{A.\arabic{equation}}
\renewcommand{\thefigure}{A.\arabic{figure}}

\section{Proof of proposition III.1}\label{sec_appendix_proposition_rate_imp}

Over a subcarrier $n$ with user $m$ and eavesdropper $e$, secure rate improvement implies that the secure rate with jammer should be greater than that without jammer, i.e.,

\begin{equation*}\label{inequality_secure_rate_imp}
\log_2 \left( \frac{1+\frac{P_{s_n} |h_{m,n}|^2}{\sigma^2 + P_{j_n} |g_{m,n}|^2}} {1+\frac{P_{s_n} |h_{e,n}|^2}{\sigma^2 + P_{j_n} |g_{e,n}|^2}} \right)-\log_2 \left( \frac{1+\frac{P_{s_n} |h_{m,n}|^2}{\sigma^2}}{1+\frac{P_{s_n} |h_{e,n}|^2}{\sigma^2}} \right) > 0 .
\end{equation*}
After simplifications we have following inequality in $P_{j_n}$:
\begin{align}\label{inequality_rate_improvement_p_j_n}
& P_{j_n} |g_{m,n}|^2 |g_{e,n}|^2 \left( |h_{m,n}|^2- |h_{e,n}|^2 \right) \nonumber \\
& \qquad <  P_{s_n} \left( |g_{e,n}|^2 - |g_{m,n}|^2 \right) |h_{m,n}|^2 |h_{e,n}|^2  \nonumber \\
& \qquad \qquad + \sigma^2 \left( |g_{e,n}|^2 |h_{e,n}|^2 - |g_{m,n}|^2 |h_{m,n}|^2 \right).
\end{align}

Since $|h_{m,n}|^2 > |h_{e,n}|^2$, coefficient of $P_{j_n}$ is positive. To have $P_{j_n}>0$, right hand side of \eqref{inequality_rate_improvement_p_j_n} should be positive, i.e.,  
\begin{align}\label{jammer_rate_imp_psn_constraint}
& P_{s_n} \left( |g_{e,n}|^2 - |g_{m,n}|^2 \right) |h_{m,n}|^2 |h_{e,n}|^2 \nonumber \\
& \quad + \sigma^2 \left( |g_{e,n}|^2 |h_{e,n}|^2 - |g_{m,n}|^2 |h_{m,n}|^2 \right) > 0.
\end{align}

If $|g_{m,n}|^2 > |g_{e,n}|^2$, then $|g_{m,n}|^2 |h_{m,n}|^2>|g_{e,n}|^2 |h_{e,n}|^2$, therefore, (\ref{jammer_rate_imp_psn_constraint}) cannot be satisfied for any $P_{s_n}>0$. Thus, when the jammer is affecting the intended user strongly compared to the eavesdropper, secure rate cannot be improved. If $|g_{m,n}|^2 < |g_{e,n}|^2$, the source power $P_{s_n}$ is conditioned as:
\begin{equation}\label{jammer_rate_psn_inequality}
P_{s_n} > \frac{ \sigma^2 \left( |g_{m,n}|^2 |h_{m,n}|^2 - |g_{e,n}|^2 |h_{e,n}|^2 \right) }{ \left( |g_{e,n}|^2 - |g_{m,n}|^2 \right) |h_{m,n}|^2 |h_{e,n}|^2  } \triangleq P_{s_{n}}^{th_i}.
\end{equation}

If $|g_{m,n}|^2 |h_{m,n}|^2 - |g_{e,n}|^2 |h_{e,n}|^2>0$, $P_{s_n}$ should be above certain threshold $P_{s_{n}}^{th_i}$ as mentioned in \eqref{jammer_rate_psn_inequality}.  
If $|g_{m,n}|^2 |h_{m,n}|^2 - |g_{e,n}|^2 |h_{e,n}|^2<0$,  $P_{s_{n}}^{th_i}$ becomes negative, which means that the secure rate can be improved for any $P_{s_n}>0$.
Thus, under certain channel conditions and certain source power and jammer power constraints the secure rate of a user can be improved over a subcarrier.
Note that, in order to have positive secure rate improvement, the jammer power $P_{j_{n}}$ should be below certain threshold $P_{j_{n}}^{th_i}$ as described in (\ref{inequality_rate_improvement_p_j_n}).
The secure rate with jammer is equal to the secure rate without jammer at $P_{j_n} = 0$ and $P_{j_n} = P_{j_{n}}^{th_i}$, and in between these two extremes there is a secure rate improvement. Intuitively it indicates the presence of a maxima as described below.

From (\ref{secure_rate_definition}) first derivative of  $R_{m,n}$ with respect to $P_{j_n}$ is:
\begin{equation*}
\frac{\delta R_{m,n}}{\delta P_{j_n}} = \frac{1}{\ln2} \left[\frac{ \frac{-P_{s_n} |h_{m,n}|^2 |g_{m,n}|^2}{\left(\sigma^2 + P_{j_n}|g_{m,n}|^2\right)^2}} {1 + \frac{P_{s_n}|h_{m,n}|^2}{\sigma^2 + P_{j_n}|g_{m,n}|^2}} 
 - \frac{ \frac{-P_{s_n} |h_{e,n}|^2 |g_{e,n}|^2}{\left(\sigma^2 + P_{j_n}|g_{e,n}|^2\right)^2}} {1 + \frac{P_{s_n}|h_{e,n}|^2}{\sigma^2 + P_{j_n}|g_{e,n}|^2}} \right].
\end{equation*}
Simplifying it further we get,
\begin{align*}
&\frac{\delta R_{m,n}}{\delta P_{j_n}} = \left[ \frac{P_{s_n} |h_{e,n}|^2 |g_{e,n}|^2}{\left( \sigma^2 + P_{j_n}|g_{e,n}|^2 \right) \left( \sigma^2 + P_{j_n}|g_{e,n}|^2 + P_{s_n}|h_{e,n}|^2 \right) } \right. \nonumber \\
&- \left. \frac{P_{s_n} |h_{m,n}|^2 |g_{m,n}|^2}{ \left( \sigma^2 + P_{j_n}|g_{m,n}|^2 \right) \left( \sigma^2 + P_{j_n}|g_{m,n}|^2 + P_{s_n}|h_{m,n}|^2 \right)} \right] \frac{1}{\ln2}.
\end{align*}

The derivative has a quadratic in numerator and a fourth-order equation of $P_{j_n}$ with all positive coefficients in the denominator. Hence, the denominator is always positive when $P_{j_n}>0$.
Setting the derivative equal to zero, we obtain a quadratic equation of the form $x_n P_{j_n}^2 + y_n P_{j_n} + z_n = 0 $ with the following coefficients: 
%where
\begin{align}\label{rate_derivative_pjn_quad_coef1}
x_n &= |g_{m,n}|^2 |g_{e,n}|^2 \left( |g_{m,n}|^2 |h_{e,n}|^2-|g_{e,n}|^2 |h_{m,n}|^2 \right) \\
y_n &= 2 \sigma^2 |g_{m,n}|^2 |g_{e,n}|^2 \left( |h_{e,n}|^2-|h_{m,n}|^2 \right)  
\label{rate_derivative_pjn_quad_coef2}\\
z_n & = \sigma^2 P_{s_n} |h_{m,n}|^2 |h_{e,n}|^2 \left( |g_{e,n}|^2-|g_{m,n}|^2 \right) \nonumber \\
& \quad + \sigma^4 \left( |g_{e,n}|^2 |h_{e,n}|^2-|g_{m,n}|^2 |h_{m,n}|^2 \right) 
\label{rate_derivative_pjn_quad_coef3}
\end{align}
The rate improvement scenario, i.e., $|h_{m,n}|^2>|h_{e,n}|^2$ and $|g_{e,n}|^2>|g_{m,n}|^2$, results in $x_n<0$, $y_n<0$,
and 
$z_n>0$.
Since the discriminant $\Delta_n = \sqrt{y_n^2-4x_nz_n}$ is positive, and $x_n$ and $z_n$ have opposite signs, there exists only one positive real root of the above quadratic equation. 
Thus, the derivative $\frac{\delta R_{m,n}}{\delta P_{j_n}}$ has only one zero crossing in $P_{j_n}>0$ which corresponds to a unique maxima of rate $R_{m,n}$ with respect to jammer power $P_{j_n}$.
The optimum jammer power $(P_{j_n}^{o})$ achieving maximum secure rate is obtained as the positive real root of the above quadratic equation. Observing (\ref{rate_derivative_pjn_quad_coef1}) to (\ref{rate_derivative_pjn_quad_coef3}), we note that $P_{j_n}^{o}$ is an increasing function of $P_{s_n}$

\setcounter{equation}{0}
\setcounter{figure}{0}
\renewcommand{\theequation}{B.\arabic{equation}}
\renewcommand{\thefigure}{B.\arabic{figure}}
\section{Proof of proposition IV.1}\label{sec_appendix_proposition_carr_snatch}
The condition of subcarrier snatching for user $m$ from user $e$ is to achieve positive secure rate over subcarrier $n$, i.e., 
\begin{align}\label{eqn_mmf_snatch_condition}
\log_2 \left( 1+ \gamma_{m,n}' \right) - \log_2 \left( 1+\gamma_{e,n}' \right)  >0
\end{align}
Simplifying above equation we have
\begin{equation}
\frac{P_{s_n}|h_{m,n}|^2}{\sigma^2+P_{j_n}|g_{m,n}|^2} > \frac{P_{s_n}|h_{e,n}|^2}{\sigma^2+P_{j_n}|g_{e,n}|^2}
\end{equation}
which gives the following constraint over jammer power $P_{j_n}$
\begin{align}\label{eqn_mmf_jammer_power_for_carr_snatching}
& P_{j_n} \left( |g_{e,n}|^2 |h_{m,n}|^2 - |g_{m,n}|^2 |h_{e,n}|^2 \right) \nonumber \\
& \qquad  > \sigma^2 \left( |h_{e,n}|^2 - |h_{m,n}|^2 \right).
\end{align}
Since $|h_{m,n}|^2 < |h_{e,n}|^2$, if $|g_{e,n}|^2>|g_{m,n}|^2$ such that $|g_{e,n}|^2 |h_{m,n}|^2 > |g_{m,n}|^2 |h_{e,n}|^2 $, then left hand side of (\ref{eqn_mmf_jammer_power_for_carr_snatching}) is positive and the jammer power is constrained as
\begin{equation}
 P_{j_n} > \frac {\sigma^2 \left( |h_{e,n}|^2 - |h_{m,n}|^2 \right)} {|g_{e,n}|^2 |h_{m,n}|^2 - |g_{m,n}|^2 |h_{e,n}|^2} \triangleq P_{j_{n}}^{th_s}.
\end{equation}
In case $|g_{e,n}|^2 |h_{m,n}|^2 < |g_{m,n}|^2 |h_{e,n}|^2 $, then the left hand side in (\ref{eqn_mmf_jammer_power_for_carr_snatching}) becomes negative.
Since the inequality cannot be satisfied for any positive jammer power, the jammer cannot be utilized to snatch the subcarrier in such channel conditions.
Further, if $|g_{e,n}|^2 < |g_{m,n}|^2$, then also $|g_{e,n}|^2 |h_{m,n}|^2 < |g_{m,n}|^2 |h_{e,n}|^2$, and thus using the same argument as above,
jammer cannot help in snatching the subcarrier.

Let us refer to the equations (\ref{rate_derivative_pjn_quad_coef1}) to (\ref{rate_derivative_pjn_quad_coef3}), which are the coefficients of the quadratic equation obtained after setting $\frac{\delta R_{m,n}}{\delta P_{j_n}}=0$. 
Under the subcarrier snatching scenario, we have $|h_{m,n}|^2 < |h_{e,n}|^2$, $|g_{e,n}|^2>|g_{m,n}|^2$, and $|g_{e,n}|^2 |h_{m,n}|^2 > |g_{m,n}|^2 |h_{e,n}|^2$, which cause $x_n<0, y_n>0$, and $z_n>0$.
These conditions indicate the existence of a positive real root, which corresponds to the optimal jammer power $P_{j_n}^{o}$ required to achieve maximum secure rate over the snatched subcarrier.

% Generated by IEEEtran.bst, version: 1.14 (2015/08/26)

\section*{Acknowledgment}The authors are thankful to the anonymous reviewers for the insightful comments and valuable suggestions, which have significantly improved the quality of presentation.

\begin{IEEEbiography}
[{\includegraphics[width=1in]{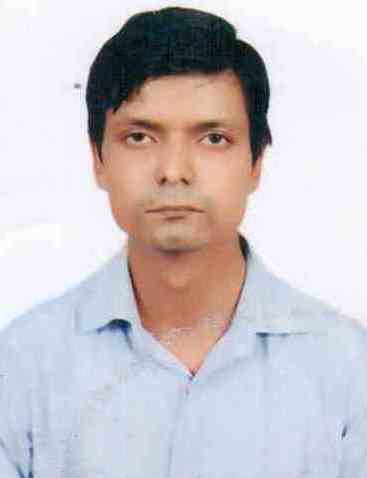}}]{Ravikant Saini [S'12]} received the B.Tech. degree in Electronics and Communication Engineering and M.Tech. degree in Communication Systems from the Indian Institute of Technology Roorkee, India in 2001 and 2005, respectively. He is currently working towards his Ph.D. degree at the Indian Institute of Technology Delhi, India. His research interests include wireless communication, resource allocation and physical layer security. 
\end{IEEEbiography}

\begin{IEEEbiography}
[{\includegraphics[width=1in]{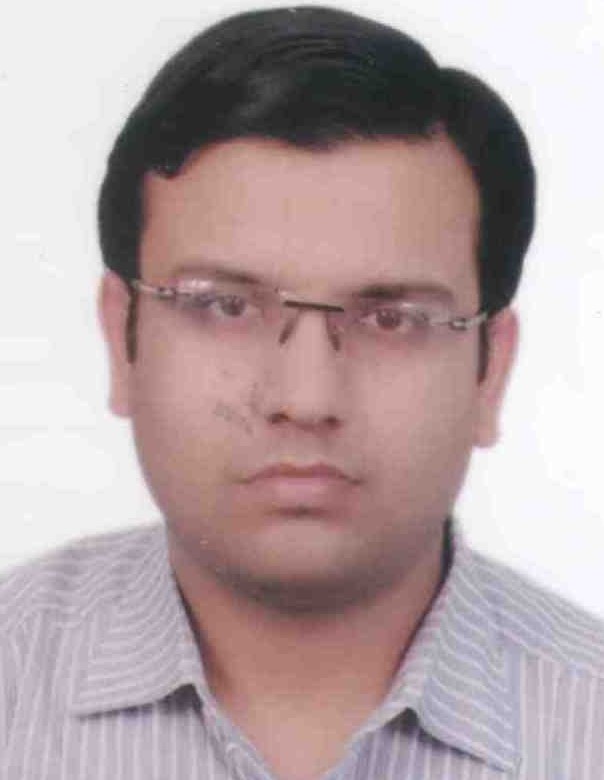}}]{Abhishek Jindal [S'08]} received the B.Tech and M.Tech degrees from the Jaypee Institute of Information Technology, Noida, India in 2009 and 2011 respectively. He is currently working towards his Ph.D. degree at the Indian Institute of Technology Delhi, India. His research interests include performance study and resource allocation for physical layer security.
\end{IEEEbiography}

\begin{IEEEbiography}
[{\includegraphics[width=1in]{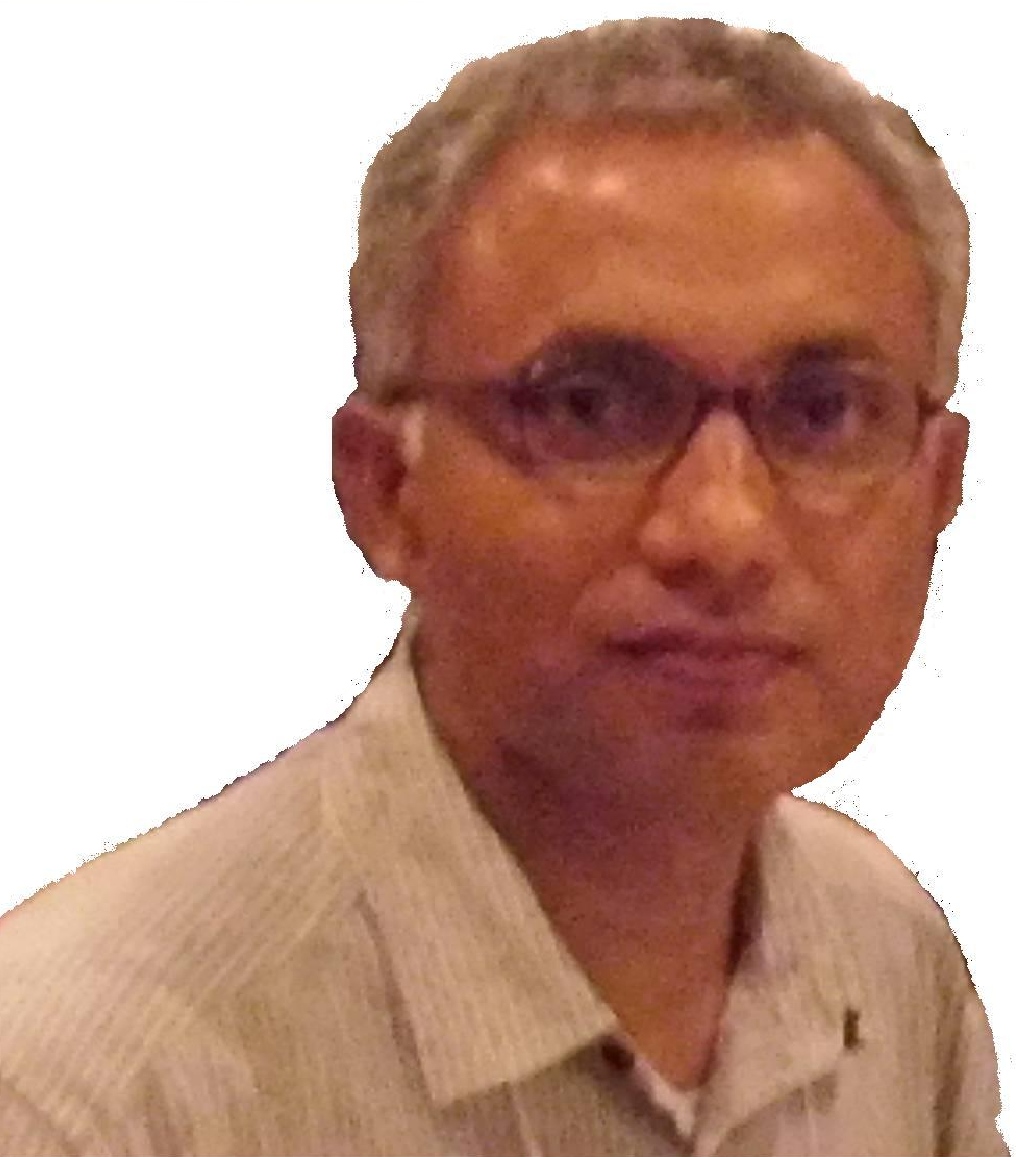}}]{Swades De [S'02-M'04-SM'14]} received the Ph.D. degree from State University of New York at Buffalo, NY, USA, in 2004. He is currently an Associate Professor in the Department of Electrical Engineering, IIT Delhi, India. In 2004, he worked as an ERCIM researcher at ISTI-CNR, Italy. From 2004 to 2006 he was with NJIT, NJ, USA as an Assistant Professor. His research interests include performance study, resource efficiency in wireless networks, broadband access networks and systems.
\end{IEEEbiography}

\end{document}